\numberwithin{equation}{section}
\def \de{\delta}
\def \er{\varepsilon}
\def \la{\lambda}
\def \te{\theta}
\def \ph{\varphi}
\def \D{\Delta}
\def \C{\mathbb{C}}
\def \R{\mathbb{R}}
\def\n{\nabla}
\def\dd{\partial}
\def\div{\operatorname{div}}
\def\rot{\operatorname{rot}}
\def\1{1\!\!\!\!1}
\def\dom{\operatorname{Dom}}
\newcommand{\<}{\langle}
\renewcommand{\>}{\rangle}
\theoremstyle{plain}
\newtheorem{theorem}{\bf Theorem}[section]
\newtheorem{lemma}[theorem]{\bf Lemma}
\newtheorem{cor}[theorem]{\bf Corollary}
\theoremstyle{definition}
\theoremstyle{remark}
\newtheorem{rem}[theorem]{\bf Remark}
\renewcommand{\le}{\leqslant}
\renewcommand{\ge}{\geqslant}
\renewcommand{\proofname}{Proof}
\renewcommand\refname{References}
\title{The Maxwell operator with periodic coefficients in a cylinder}
\author{N.~Filonov, A.~Prokhorov
\thanks{The first author was supported by the grant RNF 17-11-01126.
The second author was supported by the grant RFBR 14-01-00760.}
}
\date{}
\begin{document}
	\renewcommand\refname{References}
	\renewcommand{\abstractname}{Abstract}
	\renewcommand{\proofname}{Proof}
\maketitle

\begin{abstract}
In the paper we consider the Maxwell operator in a three-dimensional cylinder with coefficients periodic along the axis of a cylinder. 
It is proved that for cylinders with circular and rectangular cross-section the spectrum of the Maxwell operator is absolutely continuous.
\footnote{Key words: 
Maxwell operator, periodic coefficients, 
absolute continuity of the spectrum.}
\end{abstract}

\section*{Introduction}
Let $U\subset \R^2$ be a bounded domain and
$\Pi = U \times \R$ be a three-dimensional cylinder.
Let $\er$ and $\mu$ be two scalar functions describing the dielectric and magnetic permeabilities of the medium which fills the cylinder $\Pi$.
We will always assume that these functions are bounded and separated from zero:
\begin{equation}
\label{01}
0 < \er_0 \le \er(x) \le \er_1 , \quad 
0 < \mu_0 \le \mu(x) \le \mu_1 , \quad x \in \overline\Pi .
\end{equation}
The Maxwell operator (see, for example, \cite{BS89}) 
acts by the formula
\begin{equation}
\label{02}
{\cal M} 
\left( \begin{array}{cc} u \\ v \end{array} \right) =
\left( \begin{array}{cc}
i \er^{-1} \rot v \\ -i \mu^{-1} \rot u 
\end{array} \right)
\end{equation}
on the pairs of vector functions $\{u,v\}$ defined on $\Pi$, satisfying the divergence free condition 
\begin{equation}
\label{03}
\div (\er u) = \div (\mu v) = 0
\end{equation}
and the perfect conductivity condition on the boundary of the cylinder 
\begin{equation}
\label{04}
\left.u_\tau \right|_{\dd\Pi} = 0, \quad 
\left.v_\nu \right|_{\dd\Pi} = 0 ;
\end{equation}
the subscripts $\tau$ and $\nu$ denote the tangent and normal components of vector respectively.
Functions $u$ and $v$ have the meaning of electric and magnetic field components in the cylinder.
The exact definition of the operator $\cal M$ is given below in section 1.
The Maxwell operator is self-adjoint in a suitable Hilbert space.
We are interested in the structure of its spectrum when the coefficients  $\er$ and $\mu$ are periodic along the axis of the cylinder.
\begin{equation}
\label{05}
\er (x + e_3) = \er (x), \quad \mu (x + e_3) = \mu (x), \quad x \in \overline\Pi .
\end{equation}
It is well known that for operators with periodic coefficients the spectrum has a band structure (see, for example, \cite{Ku}), there is no singular continuous component (see also \cite{FilSob}), and the eigenvalues can be only of infinite multiplicity (degenerate bands). Our goal is to establish the absolute continuity of the spectrum of the Maxwell operator or, equivalently, the absence of its eigenvalues.

In the paper of A.~Morame absolute continuity of the Maxwell operator was established in the whole space $\R^3$ in the case of scalar coefficients $\er, \mu \in C^\infty$ periodic with respect to three-dimensional lattice. T.~A.~Suslina simplified the proof of Morame, which made possible to relax the smoothness conditions to
\begin{equation}
\label{06}
\er, \mu \in W^2_{3/2, loc} .
\end{equation}
In addition, Suslina proved the absolute continuity of the Maxwell operator in 
a layer $[0,a] \times \R^2$ under the same condition \eqref{06}, see \cite{Su} (formally speaking, it's required in \cite{Su} that $\er, \mu \in W^2_{3/2, loc} \cap W^1_{p, loc}$,
$p>3$, but the proof works in the case $p=3$, and then \eqref{06} is sufficient by embedding theorem $W^2_{3/2} \subset W^1_3$). In the paper \cite{FKl} the absolute continuity of the Maxwell operator in $\R^3$ was established in the case when the coefficients are periodic along certain
directions
and tending to constants super-exponentially fast in the remaining directions
(the so-called soft waveguide).

We show that in the case of cylinders with circular and rectangular cross-section the spectrum of the Maxwell operator is absolutely continuous. 
We also assume that the coefficients $ \er $ and $ \mu $ are scalar functions (which corresponds to an isotropic medium), rather than matrix-valued functions, and sufficiently smooth. The question of what happens if at least one of these conditions is removed is open already in the case of $ \R^3 $. It is known that if we abandon both conditions, that is, consider an operator with nonsmooth matrix coefficients, then the spectrum {\it can} have eigenvalues of infinite multiplicity (see \cite{D}).

In all three papers \cite{Mo}, \cite{Su} and \cite{FKl} it is proved that an eigenvalue of the Maxwell operator \eqref{02}--\eqref{04} is possible only if there is an eigenvalue of some Schr\"odinger operator $-\D + V$. The function $V$ is periodic, $(6\times 6)$--matrix-valued and, generally speaking, non-self-adjoint. In the case of a layer the Robin boundary condition arises with periodic coefficients that are a combination of
$ \er $, $ \mu $ and their derivatives. We follow the same pattern. In this paper we show that in the case of a cylinder with quite arbitrary cross-section, the question of the absolute continuity of
the Maxwell operator also reduces to the question of the absence of eigenvalues for some Schr\"odinger operator in a cylinder with the Robin boundary condition. Moreover, the terms related to the curvature of the boundary are added to the coefficients in the boundary condition. The result on the absolute continuity of the Maxwell operator is obtained only for cylinders with rectangular or circular cross-section, since currently the corresponding theorems for the Schr\"odinger operator are obtained only for such cylinders, see \cite {K1} and \cite {K2}.

\section{Formulation of results}
Let $ \er $, $ \mu $ be real scalar functions in the cylinder $ \Pi = U \times \R $ satisfying the conditions \eqref {01} and \eqref {05}.
We introduce the Hilbert space
\begin{equation}
\label{10}
J = \left\{(u;v) \in L_2 (\Pi, \C^3, \er dx) \oplus 
L_2 (\Pi, \C^3, \mu dx) : \div(\er u) = \div (\mu v) = 0, 
\left.v_\nu\right|_{\dd\Pi} = 0\right\} .
\end{equation}
Conditions on $ u $ and $ v $ are understood in the sense of integral identities
$$
\div(\er u) = 0 \quad \Leftrightarrow \quad 
\int_\Pi \< \er u, \n \eta\>\, dx = 0 \quad \forall \eta \in H^1_0 (\Pi, \C),
$$
$$
\div(\mu v) = 0, \left.v_\nu\right|_{\dd\Pi} = 0 
\quad \Leftrightarrow \quad 
\int_\Pi \< \mu v, \n \te\>\, dx = 0 \quad \forall \te \in H^1 (\Pi, \C),
$$
where $ H^1 $, $ H^1_0 $ are the Sobolev space;
$ H^1_0 (\Pi) $ is the closure of the set $ C_0^\infty (\Pi) $ in $ H^1 (\Pi) $.
In the space $ J $ we consider the Maxwell operator
\begin{equation}
\label{11}
{\cal M} =
\left( \begin{array}{cc}
0 & i \er^{-1} \rot \\
-i \mu^{-1} \rot & 0
\end{array} \right) 
\end{equation}
on the domain
\begin{equation}
\label{12}
\dom {\cal M} = 
\{(u;v) \in J : \rot u, \rot v \in L_2 (\Pi, \C^3), 
\left.u_\tau\right|_{\dd\Pi} = 0\} .
\end{equation}
The boundary condition is again defined in terms of the integral identity
$$
\left.u_\tau\right|_{\dd\Pi} = 0
\quad \Leftrightarrow \quad 
\int_\Pi \<\rot u, z\> dx = \int_\Pi \<u, \rot z\> dx \quad 
\forall \ z \in L_2(\Pi, \C^3) : \rot z \in L_2 (\Pi, \C^3) .
$$
It is well known (see \cite {BS89}) that the so defined operator $ {\cal M} $ is self-adjoint.

We introduce the notation $ \widetilde {L_p} (\Pi) $ for the space of periodic
functions $ f $, 
$$
f(x+e_3) = f(x),
$$ 
such that $ f \in L_p (\Pi \cap B_R) $ for any $ R $, where $ B_R $ is a ball of radius $ R $.
The symbols $ \widetilde {W_p ^ l} (\Pi) $ and $ \widetilde {L_p} (\dd  \Pi) $ have a similar meaning.

In the previous work of the authors \cite [Theorem 1.4] {FP} is established

\begin{theorem}
\label{t11}
Let $ \Pi = U \times \R $, where the cross-section $ U $ is a convex bounded domain on the plane. Let the coefficients $ \er $, $ \mu $ satisfy \eqref{01} and $ \er, \mu \in \widetilde W^1_3 (\Pi) $. Then the set $ \dom {\cal M} $ admits an equivalent description
$$
\dom {\cal M} = \left\{ (u;v) \in H^1 (\Pi, \C) :
\div(\er u) = \div (\mu v) = 0,
\left.u_\tau\right|_{\dd\Pi} = 0,
\left.v_\nu\right|_{\dd\Pi} = 0\right\} ;
$$
here the boundary conditions can be understood in the sense of traces.
\end{theorem}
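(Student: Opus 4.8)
The plan is to prove the two inclusions. The inclusion ``$\supseteq$'' is elementary: an $H^1(\Pi)$ vector field automatically has square-integrable $\rot$, and for $H^1$ fields the trace conditions $\left.u_\tau\right|_{\dd\Pi}=0$, $\left.v_\nu\right|_{\dd\Pi}=0$ together with $\div(\er u)=\div(\mu v)=0$ imply the integral-identity versions occurring in the definitions of $J$ and $\dom{\cal M}$; hence the right-hand set lies in $\dom{\cal M}$. The real content is the inclusion $\dom{\cal M}\subseteq H^1(\Pi)$. Since the off-diagonal form of ${\cal M}$ plays no role here, it suffices to prove two parallel assertions: \textbf{(E)} if $u\in L_2(\Pi)$, $\rot u\in L_2(\Pi)$, $\div(\er u)=0$ and $\left.u_\tau\right|_{\dd\Pi}=0$, then $u\in H^1(\Pi)$; \textbf{(H)} if $v\in L_2(\Pi)$, $\rot v\in L_2(\Pi)$, $\div(\mu v)=0$ and $\left.v_\nu\right|_{\dd\Pi}=0$, then $v\in H^1(\Pi)$.

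For (E) the estimates are local and invariant under shifts along the axis, so I would fix $R$ and work on the bounded convex cylinder $\Omega=U\times(-R-1,R+1)$. Put $g=\rot u\in L_2(\Omega)$; then $\div g=0$, and the flux of $g$ through each connected component of $\dd\Omega$ vanishes --- through the lateral surface because there $(\rot u)\cdot\nu$ is a surface curl of $u_\tau=0$, and through a cross-sectional disc because that flux equals, by Stokes' formula, the circulation of $u$ along $\dd U$, which is zero since $\left.u_\tau\right|_{\dd\Pi}=0$. Consequently there is a vector potential $a\in H^1(\Omega)$ with $\rot a=g$, $\div a=0$, $\left.a_\tau\right|_{\dd\Omega}=0$ (existence of such a potential is classical; its $H^1$-regularity uses the convexity of $\Omega$ --- see \cite{BS89} and the references therein). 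In particular $a\in L_6(\Omega)$. Since $\rot(u-a)=0$ on the simply connected $\Omega$, we can write $u=a+\n\ph$ with $\ph\in H^1(\Omega)$; the condition $\left.u_\tau\right|_{\dd\Pi}=0$ forces $(\n\ph)_\tau=0$ on the lateral surface $\dd\Pi\cap\overline\Omega$, so $\ph$ is constant there, and subtracting this constant we may assume $\left.\ph\right|_{\dd\Pi}=0$. Finally, from $\div(\er u)=0$ and $\div a=0$,
$$
\div(\er\,\n\ph)=-\div(\er a)=-\<\n\er,a\>\in L_2(\Omega),
$$
because $\n\er\in L_3$ (as $\er\in\widetilde W^1_3(\Pi)$) and $a\in L_6$. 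Thus $\ph$ solves a scalar divergence-form elliptic equation with the bounded, uniformly elliptic coefficient $\er$, right-hand side in $L_2$, and homogeneous Dirichlet data on a convex domain; it remains to show $\ph\in H^2$ on $U\times(-R,R)$, for that gives $u=a+\n\ph\in H^1\bigl(U\times(-R,R)\bigr)$.

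To prove this remaining $H^2$-regularity I would pass to the nondivergence form $\Delta\ph=\er^{-1}\<\n\er,a\>-\er^{-1}\<\n\er,\n\ph\>$ and localize. Around any $x_0\in\overline\Pi$ pick a ball $B=B_\rho(x_0)$ so small that $\|\n\er\|_{L_3(\Pi\cap B)}<\de$; this is possible because $|\n\er|^3\in L_{1,loc}$. For a cut-off $\chi\in C_0^\infty(B)$ the function $\chi\ph$ lies in $H^1_0(\Pi\cap B)$ and satisfies $\Delta(\chi\ph)=-\er^{-1}\<\n\er,\n(\chi\ph)\>+\Psi$ with $\Psi\in L_2(\Pi\cap B)$ --- the remaining (commutator) terms involve only $\n\ph\in L_2$, $\ph\in L_6$, $a\in L_6$, $\n\er\in L_3$ and bounded factors. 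The set $\Pi\cap B$ is convex, hence its Dirichlet Laplacian is $H^2$-regular, and the operator $Kw:=(-\Delta)^{-1}\bigl(\er^{-1}\<\n\er,\n w\>\bigr)$ is bounded both on $H^1_0(\Pi\cap B)$ and on $H^2(\Pi\cap B)\cap H^1_0(\Pi\cap B)$ with norm $\le C\de$ (for the second space one uses $H^2\subset W^1_6$ on the bounded set $\Pi\cap B$). Choosing $\de$ small enough makes $K$ a contraction, and then $(I-K)(\chi\ph)=-(-\Delta)^{-1}\Psi\in H^2$ gives $\chi\ph\in H^2(\Pi\cap B)$. A finite covering of $\overline{U\times(-R,R)}$ by such balls yields $\ph\in H^2$ there, which completes (E). Assertion (H) goes the same way: one first produces $b\in H^1(\Omega)$ with $\rot b=\rot v$, $\div b=0$, $\left.b_\nu\right|_{\dd\Omega}=0$ (possible on the convex simply connected $\Omega$, e.g.\ by taking $b=\rot\Phi$ for a suitable solution $\Phi$ of an auxiliary ``magnetic'' boundary value problem), writes $v=b+\n\psi$, and observes that $\left.v_\nu\right|_{\dd\Pi}=0$ turns the equation for $\psi$ into $\div(\mu\,\n\psi)=-\<\n\mu,b\>\in L_2$ with the Neumann condition $\left.\dd_\nu\psi\right|_{\dd\Pi}=0$, again $H^2$-regular on convex domains and handled by the same localization.

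The step I expect to be the main obstacle is exactly the $H^2$-regularity of $\ph$ and $\psi$: the first-order term $\<\n\er,\n\ph\>$ sits at the critical Sobolev scaling ($\n\er\in L_3$ while $\n\ph\in L_6$ once $\ph\in H^2$), so no naive iteration gains regularity. What rescues the argument is that the $L_3$-norm of $\n\er$ over a sufficiently small ball is as small as one wishes, which turns the critical term into a genuine perturbation once the problem is localized. Everything else --- the existence of the vector potentials in the prescribed gauges and the $H^2$-regularity of the Dirichlet and Neumann Laplacians on convex domains --- is classical, and the overall scheme parallels the reduction carried out for a layer in \cite{Su}.
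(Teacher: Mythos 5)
First, a point of reference: this paper does not actually prove Theorem \ref{t11} --- it is imported verbatim from the authors' earlier work \cite[Theorem 1.4]{FP}, so there is no in-paper proof to compare against. Judged on its own terms, your overall architecture is the standard one for such regularity statements (decouple the electric and magnetic assertions, subtract a vector potential in $H^1$, reduce to scalar elliptic regularity for $\div(\er\n\ph)=f\in L_2$, and handle the critical first-order term $\er^{-1}\<\n\er,\n\ph\>$ by making $\|\n\er\|_{L_3}$ small on small balls), and the perturbation argument for the localized Dirichlet/Neumann problems on convex sets is correct as you set it up.

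There is, however, a genuine gap in the localization step of assertion (E). You claim that zero flux of $g=\rot u$ through the pieces of $\dd\Omega$ yields a potential $a$ with $\rot a=g$, $\div a=0$ and $\l.a_\tau\r|_{\dd\Omega}=0$. But the condition $\l.a_\tau\r|_{\dd\Omega}=0$ forces $\l.(\rot a)\cdot\nu\r|_{\dd\Omega}=0$ \emph{pointwise} (the normal component of a curl is the surface curl of the tangential trace), and on the artificial cross-sections $U\times\{\pm(R+1)\}$ the quantity $(\rot u)\cdot\nu=\dd_1u_2-\dd_2u_1$ has no reason to vanish; vanishing of the \emph{total} flux through boundary components is the compatibility condition for the \emph{normal} gauge $\l.a_\nu\r|_{\dd\Omega}=0$, not the tangential one. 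So the potential you invoke does not exist in general, and the error propagates: without $a_\tau=0$ on the lateral surface you only get $(\n\ph)_\tau=-a_\tau$ there, i.e.\ inhomogeneous Dirichlet data of class $H^{1/2}$, which is not enough for the $H^2$ regularity of $\ph$ you need. The standard repair is to cut off in $x_3$ \emph{before} constructing the potential: for $\chi=\chi(x_3)$ supported in $(-R-1,R+1)$ and equal to $1$ on $(-R,R)$, the field $\tilde u=\chi u$ satisfies $\rot\tilde u=\chi\rot u+\chi' e_3\times u\in L_2$, vanishes near the discs, and still has $\tilde u_\tau=0$ on $\dd\Pi$, so $(\rot\tilde u)\cdot\nu=0$ on all of $\dd\Omega$ and the tangential-gauge potential exists and is in $H^1(\Omega)$ by convexity; the price is that $\div(\er\tilde u)=\er\chi' u_3$ is no longer zero, but it is in $L_2$ and merely adds a harmless right-hand side to the equation for $\ph$. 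With this correction (and the routine checks you left implicit: the Stokes/trace manipulations for $L_2$ fields with $L_2$ curl, the uniformity in $n$ of the slab-wise constants coming from periodicity, and the choice of cut-offs with $\dd_\nu\chi=0$ on $\dd\Pi$ in the Neumann case), the argument goes through.
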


Thus, the "weak"\ Maxwell operator coincides with the "strong"\ Maxwell operator.

\begin{rem}
\label{r12}
The claim of the Theorem \ref {t11} remains valid if the convexity condition of the domain $ U $ is replaced by the smoothness condition $ \dd U \in W^2_p $, $ p> 2 $.
\end{rem}

We introduce the space
$$
\hat H^1 (\Pi) = 
\{\Phi \in H^1 (\Pi, \C^6) : \left.N \Phi\right|_{\dd\Pi} = 0\} ,
$$
where
$$
N = 
\left( \begin{array}{cccccc}
0 & -\nu_3 & \nu_2 & 0 & 0 & 0 \\
\nu_3 & 0 & - \nu_1 & 0 & 0 & 0 \\
-\nu_2 & \nu_1 & 0 & 0 & 0 & 0 \\
0 & 0 & 0 & \nu_1 & \nu_2 & \nu_3
\end{array} \right) ,
$$
$\nu(x)$ is the normal vector at the point $x\in\dd\Pi$.

We introduce an additional restriction on the cylinder $ \Pi $. Let $ q \ge 3/2 $, $ r \ge 2 $.

\vspace{4pt}
\noindent {\bf Condition} $A(q,r)$. 
{\it The domain $\Pi$ is such that conditions}
$$
\Phi \in \hat H^1 (\Pi), \qquad 
V \in \widetilde{L_q} (\Pi, \operatorname{Mat} (6\times 6, \C)), \qquad
\Sigma \in \widetilde{L_r} (\dd\Pi, \operatorname{Mat} (6\times 6, \C))
$$
{\it and integral identity}
\begin{equation}
\label{13}
\int_\Pi \left(\< \dd_j \Phi, \dd_j \Psi \>_{\C^6}
+ \< V \Phi, \Psi \>_{\C^6}\right) dx +
\int_{\dd\Pi} \< \Sigma \Phi, \Psi \>_{\C^6} dS(x) = 0,
\quad \forall \Psi \in \hat H^1 (\Pi),
\end{equation}
{\it imply} $\Phi \equiv 0$.

Hereinafter, summation over repeated indices from 1 to 3 is implied.

\begin{rem}
The identity \eqref {13} means that the function $ \Phi $ lies in the kernel of the matrix Schr\"odinger operator $ - \D + V $ in the cylinder $ \Pi $ with boundary conditions
$$
\left.\Phi_\tau^{(1)}\right|_{\dd\Pi} = 0, \quad 
\left.\Phi_\nu^{(2)}\right|_{\dd\Pi} = 0, \quad 
\left.\left(\dd_\nu \Phi^{(1)} + (\Sigma\Phi)^{(1)}\right)_\nu\right|_{\dd\Pi} = 0,
\quad 
\left.\left(\dd_\nu \Phi^{(2)} + (\Sigma\Phi)^{(2)}\right)_\tau\right|_{\dd\Pi} = 0,
$$
where
$\Phi^{(1)} = (\Phi_1, \Phi_2, \Phi_3)$,
$\Phi^{(2)} = (\Phi_4, \Phi_5, \Phi_6)$.
Thus, the condition $ A (q, r) $ means that for any Schr\"odinger operator with potentials $ V $ and $ \Sigma $ from the considered classes there are no eigenvalues.
\end{rem}

We now state the main result.

\begin{theorem}
\label{t12}
Let $ U $ be a bounded convex domain in the plane
with piecewise $ C^2 $-smooth boundary.
Suppose that the cylinder $ \Pi = U \times \R $ satisfies the condition $ A (q, r) $
for some $ q \ge 3/2 $, $ r \ge 2 $.
Let the scalar coefficients $ \er $, $ \mu $ satisfy the condition
\eqref {01} and
$$
\er, \mu \in \widetilde W^2_p (\Pi), \qquad
p = \max \left(q, \frac{3r}{2+r}\right) .
$$
Then the spectrum of the Maxwell operator defined by the formulas
\eqref {10} -- \eqref {12} is absolutely continuous.
\end{theorem}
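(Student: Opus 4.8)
The plan is to establish the equivalent assertion that the Maxwell operator $\cal M$ has no eigenvalues (see the Introduction), arguing by contradiction. So suppose $\la\in\R$ is an eigenvalue and $\Phi=(u;v)\in\dom{\cal M}$, $\Phi\ne0$, satisfies ${\cal M}\Phi=\la\Phi$, i.e. $\rot v=-i\la\er u$, $\rot u=i\la\mu v$, $\div(\er u)=\div(\mu v)=0$, together with $u_\tau|_{\dd\Pi}=0$ and $v_\nu|_{\dd\Pi}=0$. The first step is to gain regularity: since $p\ge q\ge3/2$ we have $\er,\mu\in\widetilde W^2_p(\Pi)\subset\widetilde W^1_3(\Pi)$, and $U$ is convex, so Theorem~\ref{t11} applies and gives $\Phi\in H^1(\Pi,\C^6)$ with the boundary conditions holding for traces. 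Inspecting the matrix $N$ introduced before Condition~$A(q,r)$, one sees that $\nu\times u=0$, $\nu\cdot v=0$ on $\dd\Pi$ is precisely $N\Phi|_{\dd\Pi}=0$, so $\Phi\in\hat H^1(\Pi)$.

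Next I would dispose of $\la=0$ separately. Then $\rot u=\rot v=0$, and since $\Pi=U\times\R$ is simply connected we may write $u=\n f$, $v=\n g$. The condition $u_\tau|_{\dd\Pi}=0$ makes $f$ equal to a constant $c$ on the connected surface $\dd\Pi$, while $\div(\er\n f)=0$; a standard energy estimate (testing against $\chi(x_3)(f-c)$ with a cut-off $\chi$ and letting it expand) forces $\n f\equiv0$. Symmetrically, $\div(\mu\n g)=0$ and $v_\nu|_{\dd\Pi}=0$ give $\n g\equiv0$. Hence $\Phi\equiv0$, a contradiction; thus from now on $\la\ne0$.

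Now the core of the argument: reduce to a Schr\"odinger operator. Writing $v=(i\la\mu)^{-1}\rot u$ and inserting it into $\rot v=-i\la\er u$, using $\rot\rot u=\n\div u-\D u$ and $\div u=-\er^{-1}\n\er\cdot u$ (which follows from $\div(\er u)=0$), and proceeding symmetrically for $v$, one arrives at a second order elliptic system for $\Phi$ of the form $-\D\Phi+B_j\dd_j\Phi+V_0\Phi=0$ in $\Pi$ with coefficients built from $\er,\mu$ and their derivatives up to second order. Following the substitution used in \cite{Mo} and \cite{Su}, I would eliminate the first order part and rewrite the system as $-\D\Phi+V\Phi=0$ with a periodic $(6\times6)$--matrix potential $V$; passing to the weak formulation and incorporating the Maxwell boundary conditions gives exactly \eqref{13}, with $\Phi^{(1)}_\tau|_{\dd\Pi}=0$, $\Phi^{(2)}_\nu|_{\dd\Pi}=0$ already encoded in $\Phi\in\hat H^1(\Pi)$ and the boundary term $\int_{\dd\Pi}\<\Sigma\Phi,\Psi\>\,dS$ produced by integration by parts over $\Pi$; here $\Sigma$ is assembled from the boundary traces of $\n\er,\n\mu$ and from the principal curvatures of $\dd U$, the latter being the new ingredient compared with the layer case of \cite{Su}. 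One then checks, using Sobolev and trace embeddings for $\widetilde W^2_p(\Pi)$, that $V\in\widetilde{L_q}(\Pi)$ and $\Sigma\in\widetilde{L_r}(\dd\Pi)$: the condition $p\ge q$ takes care of $\n^2\er,\n^2\mu\in\widetilde{L_p}(\Pi)$ and, together with $q\ge3/2$, of the quadratic terms $|\n\er|^2,|\n\mu|^2$, while $p\ge3r/(2+r)$ is precisely the exponent for which the trace of a function in $\widetilde W^1_p(\Pi)$ lies in $\widetilde{L_r}(\dd\Pi)$ (recall $\n\er,\n\mu\in\widetilde W^1_p(\Pi)$). Thus $\Phi\ne0$ is a nonzero element of $\hat H^1(\Pi)$ satisfying \eqref{13} with admissible $V$ and $\Sigma$, which contradicts Condition~$A(q,r)$. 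Hence $\cal M$ has no eigenvalues and its spectrum is absolutely continuous.

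I expect the main obstacle to be this third step: writing down the reduced system accurately and, above all, deriving the boundary identity \eqref{13}, i.e. identifying the exact form of $\Sigma$ in which the principal curvatures of $\dd U$ enter (this is what is new relative to \cite{Su}). One must also make sure that the corners of the piecewise $C^2$ curve $\dd U$ do not contribute to the boundary integrals: they form a null set on $\dd\Pi$, so I would first integrate by parts away from the edges and then pass to the limit using the $H^1(\Pi)$-regularity supplied by Theorem~\ref{t11}.
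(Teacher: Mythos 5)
Your skeleton coincides with the paper's strategy: use Theorem~\ref{t11} to get $(u;v)\in H^1(\Pi,\C^6)$, identify the boundary conditions with $N\Phi|_{\dd\Pi}=0$ so that $\Phi\in\hat H^1(\Pi)$, reduce the eigenvalue equation to the identity \eqref{13} with a potential $V\in\widetilde{L_q}(\Pi)$ and a boundary matrix $\Sigma\in\widetilde{L_r}(\dd\Pi)$, and invoke Condition~$A(q,r)$; your exponent counting for $V$ and for the trace ($p\ge 3r/(2+r)$) is the same as the paper's. But the step you yourself flag as ``the main obstacle'' is precisely the mathematical content of the proof, and you leave it entirely unexecuted. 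The paper spends Sections~2 and~3 on it: Lemma~\ref{l23} computes the boundary expression $I(x)=\nu_k a_k\dd_j\overline b_j-\nu_j a_k\dd_k\overline b_j$ via the Frenet--Serret formulas and shows it equals $\kappa(x)\<P_{e_3^\perp}a,b\>$ under \emph{either} of the two boundary conditions \eqref{i2}, \eqref{i3}; Lemma~\ref{l22} turns this into a rot--rot Green identity with a curvature boundary term; Lemma~\ref{l25} and Theorem~\ref{t31} then produce the exact coefficients. The outcome is not guessable from the sketch: the two diagonal blocks of $\Sigma$ are \emph{not} symmetric in form (one carries $\kappa(x)I_3+(2\er)^{-1}\dd_\nu\er\, I_3$, the other $\kappa(x)P_{e_3^\perp}-(2\mu)^{-1}\dd_\nu\mu\, I_3$, with a projection and opposite signs of the normal derivatives), and an extra surface term $(\er\mu)^{-1}\<[\n\mu,v],[f,\nu]\>$ from Theorem~\ref{t31} must be shown to reduce to $-\dd_\nu\mu\<v,f\>$ using $v_\nu|_{\dd\Pi}=0$. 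Without these computations the identity \eqref{45} --- and hence the applicability of Condition~$A(q,r)$ --- is not established.

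Two further points where your route deviates unfavourably from the paper's. First, the paper never forms the strong second-order equation $-\D\Phi+B_j\dd_j\Phi+V_0\Phi=0$ (which would require justifying $\rot\rot u$ as more than an $L_2$ object); it works throughout with sesquilinear identities tested against $w$ with $w_\tau|_{\dd\Pi}=0$ and $f$ with $f_\nu|_{\dd\Pi}=0$, starting from \eqref{43}--\eqref{44}. In particular it never divides by $\la$, so your separate treatment of $\la=0$ is unnecessary. Second, the first-order term is not ``eliminated'' by the substitution $\Phi=(\er^{1/2}u;\mu^{1/2}v)$: the residual term $-\int\<\rot u,\er[\n(\er\mu)^{-1},w]\>dx$ (and its twin for $v$) survives and is converted into the \emph{zeroth-order off-diagonal coupling} $\pm2i\la F$ by substituting $\rot u=i\la\mu v$, $\rot v=-i\la\er u$ back in. This is the mechanism that makes the $6\times 6$ system genuinely coupled, and your phrase ``eliminate the first order part'' obscures it.
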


The Theorem \ref{t12} is conditional: {\it if the periodic Schr\"odinger operator with the Robin boundary condition in a cylinder does not have eigenvalues, then the spectrum of the Maxwell operator in the same cylinder is absolutely continuous.} The question of the absence of eigenvalues for the Schr\"odinger operator in a cylinder with an arbitrary cross-section remains open. I.~Kachkovskiy established it for cylinders with the rectangular cross-section and with the circular cross-section.

\begin{theorem}[\cite{K1}]
Let $ U $ be a rectangle on the plane. Then the condition $ A (3/2, r) $ with any $ r> 2 $ holds for the cylinder $ U \times \R $. 
\end{theorem}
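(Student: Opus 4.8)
The plan is to reduce the matrix Schrödinger problem in the cylinder $\Pi=U\times\R$ with $U$ a rectangle to a scalar-type problem amenable to the Thomas/Floquet argument, and to exploit that on a rectangle the eigenfunctions of the cross-sectional Laplacian with the mixed tangential/normal boundary conditions encoded by $N$ are explicit trigonometric products. First I would perform the Floquet–Bloch (Gelfand) transform along the axial direction $x_3$, writing $\Pi$ as $U\times\T$ with quasimomentum $k\in\C$ after analytic continuation; the identity \eqref{13} then becomes, for each $k$, a boundary-value problem for the operator $-\D+V$ with $\dd_3$ replaced by $\dd_3+ik$ on the fibre $U\times[0,1]$, and the claim $A(3/2,r)$ is equivalent to invertibility of this fibre operator for all real $k$, which by the standard Thomas argument follows if one shows invertibility for some complex $k$ with large imaginary part. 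So the real content is an a priori estimate: for $\im k$ large the fibre operator $-\D+V_k$ with the Robin condition $(\dd_\nu\Phi+\Sigma\Phi)$-type constraints on $\dd\Pi$ has trivial kernel.

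The key steps, in order: (i) separate variables in the rectangle — since $U=(0,a)\times(0,b)$, decompose $L_2(U,\C^6)$ along the eigenbasis of the transverse Laplacian $-\dd_1^2-\dd_2^2$ with the boundary conditions dictated by $N$ (Dirichlet for the tangential electric / normal magnetic components on each face, Neumann for the complementary ones), obtaining sines and cosines with transverse eigenvalues $\pi^2(m^2/a^2+n^2/b^2)$; (ii) observe that the principal part of the fibre form is then $\sum_{m,n}\big(|\dd_3 \hat\Phi_{mn}|^2 + (\pi^2(m^2/a^2+n^2/b^2) - k^2)|\hat\Phi_{mn}|^2\big)$ over the torus $\T$, so for $k=i\tau$ with $\tau\to\infty$ the quadratic form is coercive on $\hat H^1$ with constant $\sim\tau^2$; (iii) control the perturbation: the potential term $\int\<V_k\Phi,\Psi\>$ and the boundary term $\int_{\dd\Pi}\<\Sigma\Phi,\Psi\>$ must be dominated by the coercive part. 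For the volume term use $V\in\widetilde{L_q}$, $q\ge 3/2$, together with the Sobolev embedding $H^1(\Pi)\hookrightarrow L_6$ and interpolation, so that $\|V\Phi\|\lesssim \|V\|_{L_{3/2}}\|\Phi\|_{L_6}\lesssim \|V\|_{L_{3/2}}\|\Phi\|_{H^1}$; for the surface term use $\Sigma\in\widetilde{L_r}(\dd\Pi)$, $r>2$, with the trace embedding $H^1(\Pi)\hookrightarrow L_s(\dd\Pi)$ for $s\le 4$ and again interpolation, giving $\big|\int_{\dd\Pi}\<\Sigma\Phi,\Phi\>\big|\lesssim \|\Sigma\|_{L_r(\dd\Pi)}\|\Phi\|^2_{H^{1/2+\epsilon}}$, which is $o(\tau^2)\|\Phi\|_{L_2}^2 + C\|\n\Phi\|_{L_2}^2$ with small $C$; (iv) conclude that for $\tau$ large the full form is still positive definite, hence $\Phi\equiv 0$, establishing $A(3/2,r)$ along the imaginary axis and then, by analyticity and the Thomas argument, for all real $k$.

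The main obstacle I anticipate is step (iii), specifically the boundary term: one must make the trace/interpolation estimate uniform in the quasimomentum $k$ so that the constant multiplying $\tau^2\|\Phi\|^2_{L_2}$ genuinely tends to zero, and simultaneously keep the coefficient of $\|\n\Phi\|^2$ strictly below $1$. This is delicate because $\Sigma$ is only in $L_r(\dd\Pi)$ with $r$ barely above $2$, so the trace embedding has essentially no room to spare; the resolution is to split $\Sigma$ into a bounded piece plus a small-$L_r$ remainder, handle the bounded piece by the (uniform in $k$) compact trace embedding $H^1\hookrightarrow L_2(\dd\Pi)$, and absorb the small remainder using the sharp trace inequality. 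A secondary subtlety is that at the edges of the rectangle the boundary is only piecewise $C^2$ and the normal $\nu$ jumps, so one must check that the separation-of-variables basis still diagonalises $N$ consistently on adjacent faces and that no spurious corner contributions appear in the integration by parts; this is where the explicit product structure of eigenfunctions on a rectangle is essential and is presumably the reason the theorem is currently limited to rectangular (and, via a similar Bessel-function analysis, circular) cross-sections.
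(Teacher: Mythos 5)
This statement is quoted by the paper from Kachkovskiy's work \cite{K1}; the paper itself contains no proof of it, so there is no internal argument to compare yours against. Judging your proposal on its own terms, the general framework (Gelfand transform in $x_3$, analytic continuation in the quasimomentum, Thomas-type argument, separation of variables on the rectangle) is the right one, but step (ii) contains a genuine error that undermines the rest. For $k=i\tau$ the principal symbol of the fibre form on the mode with transverse eigenvalue $\lambda_{mn}=\pi^2(m^2/a^2+n^2/b^2)$ and axial frequency $2\pi j$ is $\lambda_{mn}+(2\pi j+i\tau)^2=\lambda_{mn}+(2\pi j)^2-\tau^2+4\pi i j\tau$; on the mode $j=0$ this equals $\lambda_{mn}-\tau^2$, which vanishes whenever $\tau^2$ hits a transverse eigenvalue. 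The fibre form is therefore \emph{not} coercive with constant $\sim\tau^2$ --- it is not coercive at all, and with purely imaginary quasimomentum the free fibre operator can even fail to be invertible. The Thomas argument instead takes $\re k$ at a half-period point so that the \emph{imaginary} part $2\tau(2\pi j+\re k)$ is bounded below by $c\tau$ uniformly in $j$, giving only an $O(\tau^{-1})$ bound on the free resolvent, with no gain on the gradient. Consequently your step (iii) cannot be closed the way you describe: the estimates $\|V\Phi\|\lesssim\|V\|_{L_{3/2}}\|\Phi\|_{H^1}$ and the trace bound for $\Sigma$ require the form to dominate $\|\Phi\|_{H^1}^2$, which it does not near the characteristic set. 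Handling $V\in L_{3/2}$ and $\Sigma\in L_r(\dd\Pi)$ with $r>2$ under only an $O(\tau^{-1})$ resolvent decay is exactly the hard analytic content of \cite{K1} (uniform $L_p\to L_{p'}$ and trace resolvent estimates for the fibre operator, typically after odd/even reflection of the rectangle to a torus), and your sketch does not supply it.

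Your secondary remarks are reasonable: the mixed Dirichlet/Neumann conditions produced by $N$ do diagonalise in a product sine/cosine basis on each face of the rectangle, and the corner issue is real but benign for a rectangle. But as written, the proposal would not yield the theorem: fix the choice of quasimomentum (half-period real part plus large imaginary part), abandon the coercivity claim in favour of resolvent bounds that exploit the imaginary part of the symbol, and replace the $H^1$-based absorption of $V$ and $\Sigma$ by uniform-in-$\tau$ $L_q$ and trace estimates for the free fibre resolvent.
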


\begin{theorem}[\cite{K2}]
Let $ U $ be a circle on the plane. Then the condition $ A (2, 4) $ holds for the cylinder $ U \times \R $.
\end{theorem}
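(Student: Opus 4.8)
The plan is to establish Condition $A(2,4)$ by the Thomas--Gelfand method, arguing by contradiction. Suppose $\Phi\in\hat H^1(\Pi)$, $\Phi\not\equiv0$, satisfies \eqref{13} with $V\in\widetilde{L_2}(\Pi)$ and $\Sigma\in\widetilde{L_4}(\dd\Pi)$. Since $\Phi\in H^1(\Pi)\subset L_2(\Pi)$ and all coefficients are $1$-periodic in $x_3$, I would apply the Gelfand transform in $x_3$. This produces, for a.e.\ quasimomentum $k\in[0,2\pi)$, a function $\hat\Phi(\cdot,k)$ on the cell $\Omega=U\times\T$ that lies in the kernel of the fiber operator $A(k)$ generated on $\hat H^1(\Omega)$ by the form
\[
a_k[\Phi,\Psi]=\int_\Omega\left(\langle(\dd_j+ik\delta_{j3})\Phi,(\dd_j+ik\delta_{j3})\Psi\rangle+\langle V\Phi,\Psi\rangle\right)dx+\int_{\dd\Omega}\langle\Sigma\Phi,\Psi\rangle\,dS .
\]
Because $\hat H^1(\Omega)$ embeds compactly in $L_2(\Omega)$, each $A(k)$ has compact resolvent; and the bulk and boundary potentials are form-subordinate to the Laplacian with arbitrarily small relative bound (via $H^1(\Omega)\hookrightarrow L_6(\Omega)$ for the $L_2$-term and the sharp trace embedding $H^{1/2}(\dd\Omega)\hookrightarrow L_4(\dd\Omega)$ for the $L_4$-term), so $\{A(k)\}$ is an analytic family of type (A) of Fredholm operators. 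As $\hat\Phi(\cdot,k)\neq0$ on a set of positive measure, analytic Fredholm theory forces $0\in\spec A(k)$ for \emph{every} $k\in\C$. It therefore suffices to exhibit one $k\in\C$ with $A(k)$ boundedly invertible.

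Next I would run the Thomas shift $k=i\tau$, $\tau\to+\infty$, and split $A(i\tau)=A_0(i\tau)+V+B_\Sigma$, where $A_0(i\tau)$ is the free part, $V$ the bulk potential and $B_\Sigma$ the boundary form. Let $\{\lambda_m\}$ be the (nonnegative, discrete) spectrum of the cross-sectional operator $-\D'$ on $U$ with the self-adjoint boundary condition induced by $N$. On the joint eigenmode labelled by $\lambda_m$ and the $x_3$-Fourier index $n$, the symbol of $A_0(i\tau)$ equals
\[
\lambda_m+(2\pi n+i\tau)^2=\lambda_m+(2\pi n)^2-\tau^2+4\pi i n\tau .
\]
For $n\neq0$ the imaginary part $4\pi n\tau$ is large, so those modes are harmless; the only danger is the band $n=0$, where the symbol is the real number $\lambda_m-\tau^2$ and can vanish in resonance.

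To control this I would choose a sequence $\tau_\ell\to\infty$ avoiding the resonances, i.e.\ with $\operatorname{dist}(\tau_\ell^2,\{\lambda_m\})\ge c\,\tau_\ell$ for a fixed $c>0$. For such $\tau_\ell$ the free operator satisfies $\|A_0(i\tau_\ell)^{-1}\|=O(\tau_\ell^{-1})$, and, more importantly, $|A_0(i\tau_\ell)|^{-1/2}$ gains regularity uniformly across all modes. Combining this gain with Hölder and the Sobolev embedding $H^1(\Omega)\hookrightarrow L_6(\Omega)$ for the bulk term, and with the trace embedding $H^1(\Omega)\to H^{1/2}(\dd\Omega)\hookrightarrow L_4(\dd\Omega)$ for the boundary term, one gets
\[
\big\||A_0(i\tau_\ell)|^{-1/2}\,(V+B_\Sigma)\,|A_0(i\tau_\ell)|^{-1/2}\big\|\longrightarrow0 ,
\]
so that $A(i\tau_\ell)=|A_0(i\tau_\ell)|^{1/2}(I+\text{small})|A_0(i\tau_\ell)|^{1/2}$ is invertible for large $\ell$ --- the desired contradiction. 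The exponents $q=2$ and $r=4$ enter exactly here: $r=4$ is the borderline Sobolev exponent of the two-dimensional boundary trace, and $q=2$ is what the available smoothing of the disk resolvent allows.

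The main obstacle is the construction of the resonance-avoiding sequence $\tau_\ell$ together with the \emph{uniform} (in $m$ and $n$) resolvent estimate of the previous step. Since $U$ is bounded, $-\D'$ has discrete spectrum, and in the critical band $n=0$ one must guarantee that infinitely often $\tau_\ell^2$ sits in a gap of width $\gtrsim\tau_\ell$ between consecutive $\lambda_m$; this requires quantitative information on the spacing of the cross-sectional eigenvalues. For the disk these $\lambda_m$ are governed by zeros of Bessel functions, whose distribution and spacing are known precisely enough to produce such $\tau_\ell$ and to close the resolvent bounds --- which is why the result holds for the circular (and, in \cite{K1}, by separation of variables, the rectangular) cross-section, while for a general bounded convex $U$ the needed spectral information on $-\D'$ is unavailable and the question stays open.
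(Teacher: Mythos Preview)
The paper does not give a proof of this statement; it is quoted from Kachkovskiy's thesis \cite{K2} and used as a black box. There is nothing in the present paper to compare your proposal against.

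As a sketch of the method behind \cite{K2}, your outline is on the right track (Gelfand transform, analytic Fredholm, Thomas complex shift), but there is a genuine gap at the resonance--avoidance step. You ask for a sequence $\tau_\ell\to\infty$ with $\operatorname{dist}(\tau_\ell^{\,2},\{\lambda_m\})\ge c\,\tau_\ell$, where $\{\lambda_m\}$ are the eigenvalues of the cross-sectional Laplacian on the disk with the boundary conditions induced by $N$. By Weyl's law in dimension two, $\#\{m:\lambda_m\le\lambda\}\sim\frac{|U|}{4\pi}\lambda$, so the mean spacing of the $\lambda_m$ is $O(1)$; gaps of width $\asymp\tau_\ell$ near level $\tau_\ell^{\,2}$ simply do not exist for the disk (nor for any planar domain). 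Hence on the $n=0$ band one cannot get $\|A_0(i\tau_\ell)^{-1}\|=O(\tau_\ell^{-1})$ as you claim, and the perturbative step does not close. A plain Thomas shift along the cylinder axis is insufficient --- this is precisely why the cylinder problem is harder than the full-space or layer problem, and why the question is open for a general cross-section.

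What you are missing is the extra symmetry of the disk. In \cite{K2} the angular variable is separated: after the Fourier decomposition $e^{in\theta}$ one is left, for each fixed angular mode, with a one-dimensional radial operator whose eigenvalues (squares of Bessel zeros $j_{n,k}$) have spacing that \emph{can} be controlled, and the Thomas-type estimate is carried out mode by mode. Your observation that $r=4$ is the critical trace exponent on the two-dimensional boundary is correct, but exactly because the embedding $H^{1/2}(\dd\Omega)\hookrightarrow L_4(\dd\Omega)$ is borderline and not compact, the smallness of the boundary form does not come for free either; handling this also requires the finer analysis of \cite{K2} and accounts for the weaker pair $(q,r)=(2,4)$ compared with the rectangle result $(3/2,r)$, $r>2$, of \cite{K1}.
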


Now the Theorem \ref{t12} implies

\begin{cor}
Let $ U $ be a rectangle, the coefficients $ \er $, $ \mu $ satisfy
\eqref {01} and $ \er, \mu \in \widetilde W^2_p (\Pi) $, $ p> 3/2 $.
Then the spectrum of the Maxwell operator \eqref {10} -- \eqref {12} is absolutely continuous.
\end{cor}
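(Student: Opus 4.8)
The plan is to read the statement off Theorem~\ref{t12} once its hypotheses are checked for a rectangular cylinder. First I would note that a rectangle $U$ is a bounded convex planar domain whose boundary is piecewise affine, hence piecewise $C^2$, so the geometric assumption of Theorem~\ref{t12} is met. Next, the theorem of I.~Kachkovskiy cited above for a rectangle (\cite{K1}) guarantees that the cylinder $U\times\R$ satisfies condition $A(3/2,r)$ for every $r>2$, so I would fix $q=3/2$.

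It then remains to choose $r$ and to record the regularity of the coefficients in the form demanded by Theorem~\ref{t12}. Given the exponent $p>3/2$ of the Corollary, set $p_0:=\min(p,2)\in(3/2,2]$ and
\[
r:=\frac{2p_0}{3-p_0} .
\]
Then $r>2$ precisely because $p_0>3/2$, so condition $A(3/2,r)$ holds; and a one-line computation gives $\tfrac{3r}{2+r}=p_0$, whence $\max\bigl(q,\tfrac{3r}{2+r}\bigr)=\max(3/2,p_0)=p_0\le p$. Since $\er$ and $\mu$ are periodic and belong to $\widetilde W^2_p(\Pi)$, the inclusion $\widetilde W^2_p(\Pi)\subset\widetilde W^2_{p_0}(\Pi)$ (valid because $p\ge p_0$ and, by periodicity, everything reduces to the bounded domains $\Pi\cap B_R$ of finite measure) gives $\er,\mu\in\widetilde W^2_{p_0}(\Pi)$, which is exactly the hypothesis of Theorem~\ref{t12} with exponent $p_0=\max\bigl(q,\tfrac{3r}{2+r}\bigr)$. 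Applying Theorem~\ref{t12} to the triple $(q,r,p_0)$ yields the absolute continuity of the spectrum of the Maxwell operator.

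I do not anticipate a genuine obstacle: the argument is entirely a matching of parameters. The two points where a word of justification is in order are the monotonicity $\widetilde W^2_p\subset\widetilde W^2_{p'}$ for $p\ge p'$, which lets us reduce to the case $p_0\le2$, and the elementary fact that $r\mapsto\tfrac{3r}{2+r}$ maps $(2,\infty)$ increasingly onto $(3/2,3)$, so that the required value $p_0$ is attained by an admissible $r$.
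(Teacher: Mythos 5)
Your argument is correct and is exactly the intended derivation: the paper states this corollary as an immediate consequence of Theorem~\ref{t12} combined with Kachkovskiy's result that a rectangular cylinder satisfies $A(3/2,r)$ for all $r>2$, leaving the parameter matching to the reader. Your choice $r=2p_0/(3-p_0)$ with $p_0=\min(p,2)$, together with the embedding $\widetilde W^2_p\subset\widetilde W^2_{p_0}$ on the bounded periodicity cells, fills in that matching correctly.
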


\begin{cor}
Let $ U $ be a disk, the coefficients $ \er $, $ \mu $ satisfy \eqref {01} and $ \er, \mu \in \widetilde W ^ 2_2 (\Pi) $.
Then the spectrum of the Maxwell operator \eqref {10} -- \eqref {12} is absolutely continuous.
\end{cor}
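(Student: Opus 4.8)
The plan is to derive the statement directly from Theorem~\ref{t12} and the result of Kachkovskiy for the disk quoted above (\cite{K2}). First I would check that a disk $U$ meets the geometric hypotheses of Theorem~\ref{t12}: it is a bounded convex planar domain whose boundary is a circle, hence $C^\infty$ and in particular piecewise $C^2$-smooth. Thus only two things remain to be supplied: a pair $(q,r)$ with $q\ge 3/2$, $r\ge 2$ for which the cylinder $\Pi=U\times\R$ satisfies Condition $A(q,r)$, and the matching Sobolev regularity of $\er$, $\mu$.

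By the theorem of \cite{K2} stated above, the cylinder over a disk satisfies Condition $A(2,4)$. I would therefore apply Theorem~\ref{t12} with $q=2$ and $r=4$; these indeed satisfy $q\ge 3/2$ and $r\ge 2$, and the exponent occurring in that theorem is
$$
p=\max\left(q,\ \frac{3r}{2+r}\right)=\max\left(2,\ \frac{12}{6}\right)=\max(2,2)=2 .
$$
Hence the smoothness requirement of Theorem~\ref{t12} reads exactly $\er,\mu\in\widetilde W^2_2(\Pi)$, which is precisely the hypothesis of the Corollary, and Theorem~\ref{t12} gives absolute continuity of the spectrum of the Maxwell operator \eqref{10}--\eqref{12}.

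I do not expect any real obstacle: all of the analytic content --- the reduction of absolute continuity of the Maxwell operator to the absence of eigenvalues of the auxiliary matrix Schr\"odinger operator with a Robin-type boundary condition in $\Pi$ (Theorem~\ref{t12}), and the verification of Condition $A$ for the disk (\cite{K2}) --- has already been done. The only point to watch is the bookkeeping of parameters: the pair $(q,r)=(2,4)$ produced by \cite{K2} is exactly the one for which the two arguments of the maximum coincide, so no regularity beyond $\widetilde W^2_2$ is needed; enlarging $r$ (the remaining freedom once $q=2$ is taken) would only raise $p$ and demand a stronger hypothesis. The preceding corollary, for a rectangle, is obtained in the same manner from Theorem~\ref{t12} and \cite{K1}, taking $q=3/2$ and $r>2$ close enough to $2$ so that $\frac{3r}{2+r}\le p$.
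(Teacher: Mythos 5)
Your proposal is correct and is exactly the argument the paper intends: the corollary follows by combining Theorem~\ref{t12} with the result of \cite{K2} giving Condition $A(2,4)$ for the cylinder over a disk, and the computation $p=\max(2,\tfrac{3\cdot 4}{2+4})=2$ matches the stated hypothesis $\er,\mu\in\widetilde W^2_2(\Pi)$.
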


\begin{rem}
The Theorem \ref {t12} remains true also for the case of a nonconvex cross-section $ U $ with $ C^2 $-smooth boundary $ \dd U $, see Remark \ref {r12}.
\end{rem}

\begin{rem}
For simplicity, we formulated the theorem \ref {t12} for the {\it cylinder} $ \Pi $. It will be clear from the proof that an analogous result holds for a periodic {\it waveguide} (a domain with a variable cross-section whose change along the $ x_3 $ axis is also periodic) with $ C ^ 2 $-smooth boundary.
\end{rem}

\section{Lemmas}

Here and in what follows it is assumed that the cylinder $ \Pi $ satisfies the conditions of the Theorem \ref {t12}.

\begin{lemma}
\label{l21}
р) Let $a \in H^1 (\Pi, \C^3)$, 
$\ph \in \widetilde W^1_3 (\Pi) \cap L_\infty (\Pi)$.
Then
$$
\rot (\ph a) = \ph \rot a + [\n\ph, a] \in L_2 (\Pi, \C^3),
$$
where  $[\,.\,,\,.\,]$ is a cross product of three-dimensional vectors.

b) Let $a, d \in H^1 (\Pi, \C^3)$, 
$b \in \widetilde W^1_{3/2} (\Pi)$.
Then
$$
\<\rot [a, b], d\> = \< a \div b - b \div a 
- \< a, \n \> b + \< b, \n \> a, d\> \in L_1 (\Pi, \C^3) .
$$
\end{lemma}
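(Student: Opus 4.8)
The plan is to establish both identities first for smooth compactly supported (in the $x_1,x_2$ directions) functions, where they are classical vector-calculus identities, and then pass to the limit using density together with the stated integrability bounds. For part a), the pointwise identity $\rot(\ph a) = \ph\rot a + [\n\ph, a]$ holds for $\ph\in C^\infty$, $a\in C^\infty$ by direct componentwise computation (the $j$-th component of $\rot(\ph a)$ is $\er_{jkl}\dd_k(\ph a_l) = \ph\,\er_{jkl}\dd_k a_l + \er_{jkl}(\dd_k\ph) a_l$). To justify it for $\ph\in\widetilde W^1_3\cap L_\infty$ and $a\in H^1$, I would approximate $\ph$ by mollification in the periodic variable and by cutting off / smoothing so that $\ph_n\to\ph$ in $\widetilde W^1_3$ and weak-$*$ in $L_\infty$, with $\|\ph_n\|_{L_\infty}$ uniformly bounded, and approximate $a$ by $a_n\to a$ in $H^1$. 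Then $\ph_n\rot a_n\to\ph\rot a$ in $L_2$ because $\ph_n$ is uniformly bounded and $\rot a_n\to\rot a$ in $L_2$; and $[\n\ph_n, a_n]\to[\n\ph,a]$ in $L_2$ because $\n\ph_n\to\n\ph$ in $L_3$ while $a_n\to a$ in $L_6$ (Sobolev embedding $H^1\subset L_6$ in three dimensions), and the product of an $L_3$-convergent sequence with an $L_6$-convergent sequence converges in $L_2$ by Hölder. The left side $\rot(\ph_n a_n)$ then also converges, in the sense of distributions, to $\rot(\ph a)$, which identifies the limit and simultaneously shows $\rot(\ph a)\in L_2$. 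The only point requiring care is that membership is local in the cross-section but the estimates are uniform on each period slab $\Pi\cap\{|x_3|\le R\}$ and periodic, so the $L_2(\Pi)$ conclusion follows by summing over period translates.

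For part b), the starting point is again a classical identity: for smooth vector fields $a, b, d$,
$$
\rot[a,b] = a\div b - b\div a - (a\cdot\n)b + (b\cdot\n)a ,
$$
which one verifies componentwise from $[a,b]_m = \er_{mkl}a_k b_l$ and expanding $\er_{jpm}\dd_p(\er_{mkl}a_k b_l)$ using $\er_{jpm}\er_{mkl} = \de_{jk}\de_{pl} - \de_{jl}\de_{pk}$. Pairing with $d$ and integrating gives the stated formula. To pass to the general case $a,d\in H^1(\Pi,\C^3)$, $b\in\widetilde W^1_{3/2}(\Pi)$, I would again mollify $b$ in the periodic variable to get $b_n\to b$ in $\widetilde W^1_{3/2}$ (so $\n b_n\to\n b$ in $L_{3/2}$ and, by the embedding $W^1_{3/2}\subset L_3$ in $\R^3$, also $b_n\to b$ in $L_3$), and approximate $a, d$ in $H^1$. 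Each term on the right is then a product of three factors whose exponents are conjugate in the sense $\tfrac1p+\tfrac1q+\tfrac1r\le 1$: for instance $a\,\div b\cdot d$ involves $a\in L_6$, $\div b\in L_{3/2}$, $d\in L_6$, and $\tfrac16+\tfrac23+\tfrac16 = 1$; the term $(a\cdot\n)b\cdot d$ involves $a\in L_6$, $\n b\in L_{3/2}$, $d\in L_6$, same balance; and $b\,\div a\cdot d$ involves $b\in L_3$, $\div a\in L_2$, $d\in L_6$, again $\tfrac13+\tfrac12+\tfrac16 = 1$. Hence each product lies in $L_1(\Pi\cap B_R)$ with norm controlled uniformly, and by Hölder the approximating products converge in $L_1$; the left side converges in the distributional sense, which pins down the limit and yields $\<\rot[a,b],d\>\in L_1$.

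The main obstacle is the bookkeeping of exponents together with the fact that $b$ (and $\ph$) are only \emph{locally} integrable on the unbounded cylinder: one must check that the Sobolev embeddings and Hölder pairings used are uniform in the period, so that summing the $L_1$ (resp. $L_2$) bounds over the countably many period slabs still gives a finite total. This is where the periodicity in \eqref{05} and the use of the spaces $\widetilde{L_p}$, $\widetilde W^l_p$ enter essentially: on each compact slab the embedding constants are the same by translation invariance, so no difficulty arises, but it is the one place where the statement is not a purely local assertion. Everything else is the standard two-step scheme — verify the smooth identity, then close it by density — and the precise integrability claims in the conclusion are exactly what the Hölder balance above delivers.
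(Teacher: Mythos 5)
Your proposal is correct and follows essentially the same route as the paper, which simply notes that the identities are well known and that the $L_2$- and $L_1$-summability follows from the embedding theorems; your Hölder bookkeeping ($H^1\subset L_6$, $W^1_{3/2}\subset L_3$, with constants uniform over period slabs) is exactly the content behind that remark.
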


Equalities are well known, and $ L_2 $- or $ L_1 $-summability follows from the embedding theorems.
\begin{lemma}
\label{l24}
Let $d \in H^1 (\Pi, \C^3)$,
$c \in H^1 (\Pi, \C^3)$ or 
$c = [a, b]$, where $a \in H^1 (\Pi, \C^3)$,\\ $b \in \widetilde W^1_{3/2} (\Pi)$.
Then
$$
\int_\Pi \< \rot c, d \> dx = 
\int_\Pi \< c, \rot d \> dx + \int_{\dd\Pi} \< c, [d, \nu] \> dS .
$$
Here $ \nu $ is the vector of unit outer normal to the boundary.
\end{lemma}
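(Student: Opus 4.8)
The statement is the Green (integration-by-parts) formula for the operator $\rot$ on the cylinder. The plan is to reduce it to the classical formula on a bounded Lipschitz domain and then pass to the limit over the exhaustion $\Pi_R:=U\times(-R,R)$. Since both sides are sesquilinear in the pair $(c,d)$, it is enough to treat real-valued $c,d$ and recover the complex case by linearity; for real vectors the identity $\<[c,d],\nu\>=\<c,[d,\nu]\>$ is just the cyclic invariance of the scalar triple product, so it will suffice to establish $\int_\Pi\<\rot c,d\>\,dx=\int_\Pi\<c,\rot d\>\,dx+\int_{\dd\Pi}\<[c,d],\nu\>\,dS$.

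First I would check that all the integrals converge. If $c,d\in H^1(\Pi,\R^3)$ this is immediate: $\rot c,\rot d,c,d\in L_2(\Pi)$, so $\<\rot c,d\>,\<c,\rot d\>,[c,d]\in L_1(\Pi)$; and the traces satisfy $\|c\|_{L_2(\dd\Pi\cap\Pi'_n)}\le C\|c\|_{H^1(\Pi'_n)}$, $\Pi'_n:=U\times[n,n+1]$, with $C$ independent of $n$ by translation invariance in $x_3$, whence $c|_{\dd\Pi},d|_{\dd\Pi}\in L_2(\dd\Pi)$ and $\<[c,d],\nu\>\in L_1(\dd\Pi)$. If $c=[a,b]$, then $\rot c$ is the $L_1(\Pi)$-vector field of Lemma~\ref{l21}(b) and $\<\rot c,d\>\in L_1(\Pi)$; that $c=[a,b]\in L_2(\Pi)$ follows from the cylindrical Sobolev embeddings $H^1(\Pi)\hookrightarrow L_2(\R;L_6(U))$ and $\widetilde W^1_{3/2}(\Pi)\hookrightarrow L_\infty(\R;L_3(U))$ (using the period of $b$) together with H\"older's inequality in the cross-section, and the boundary term is treated similarly, reducing $[a,b]\cdot[d,\nu]$ to products of scalar traces via the Lagrange identity.

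The pointwise identity $\div[c,d]=\<\rot c,d\>-\<c,\rot d\>$ holds in $\mathcal D'(\Pi)$: it is classical for smooth fields, extends to $c,d\in H^1(\Pi)$ by density of $C^\infty(\overline\Pi)$ and the product rule, and extends to $c=[a,b]$ by approximating $a$ and $b$ by smooth functions — a mollification of $b$ respects the period, so the approximants stay in $\widetilde W^1_{3/2}$ — and passing to the limit with the help of Lemma~\ref{l21}. Now take $\zeta_R\in C_0^\infty(\R)$ with $\zeta_R\equiv1$ on $[-R,R]$, $\supp\zeta_R\subset[-R-1,R+1]$, $|\zeta_R'|\le2$, and apply the divergence theorem on the bounded, piecewise $C^2$ (hence Lipschitz) domain $\Pi_{R+1}$ to the field $\zeta_R[c,d]\in W^1_1(\Pi_{R+1},\R^3)$. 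Since $\zeta_R$ vanishes on the caps $U\times\{\pm(R+1)\}$ and the lateral normal has zero third component, this gives
$$
\int_\Pi\zeta_R\big(\<\rot c,d\>-\<c,\rot d\>\big)\,dx
=-\int_\Pi\zeta_R'(x_3)\,[c,d]_3\,dx+\int_{\dd\Pi}\zeta_R\,\<[c,d],\nu\>\,dS .
$$
Letting $R\to\infty$, the left side converges to $\int_\Pi(\<\rot c,d\>-\<c,\rot d\>)\,dx$ by dominated convergence, the first term on the right tends to $0$ because $[c,d]\in L_1(\Pi)$ and $\supp\zeta_R'\subset[-R-1,-R]\cup[R,R+1]$, and the boundary term tends to $\int_{\dd\Pi}\<[c,d],\nu\>\,dS$ by dominated convergence. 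Applying the triple-product identity to rewrite $\<[c,d],\nu\>$ as $\<c,[d,\nu]\>$ finishes the proof.

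I expect the only delicate point to be exactly this bookkeeping of global integrability on the infinite cylinder — namely that every integral in the statement is finite and that the cap contributions decay — which is handled by the per-period (translation-invariant) trace and Sobolev estimates together with the periodicity of $b$; once that is in place, the passage to the limit is routine dominated convergence and the remaining algebra is elementary.
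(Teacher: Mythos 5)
Your proof is correct and takes the same route the paper intends: the paper's entire proof of Lemma \ref{l24} is the remark that the identity is well known for smooth fields and that convergence of all integrals follows from embedding theorems, which is precisely what you carry out in detail (the divergence identity $\div[c,d]=\<\rot c,d\>-\<c,\rot d\>$, mollification, per-period trace and Sobolev estimates, and a cutoff in $x_3$ to handle the unbounded cylinder). Your write-up supplies the bookkeeping the authors omit, in particular the uniform-in-$n$ estimates that make the exhaustion argument on $\Pi$ legitimate.
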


This lemma is also well known for smooth functions. The convergence of all integrals again follows from the embedding theorems.

\begin{lemma}
\label{l26}
Let $v \in H^1 (\Pi, \C^3)$, $\mu$ satisfy \eqref{01} and
$\mu \in \widetilde W^2_{3/2} (\Pi)$.
Then
\begin{eqnarray*}
\div (\mu^{-1} \n\mu) v
+ \mu^{-2} |\n\mu|^2 v 
- \<v, \n\> (\mu^{-1} \n \mu) 
= W_0 (\mu) v ,
\end{eqnarray*}
where
\begin{equation}
\label{w0}
W_0 (\mu)_{jk}= \mu^{-1} \D\mu \de_{jk} 
+ \mu^{-2} \dd_j \mu \dd_k \mu 
- \mu^{-1} \dd_j \dd_k \mu .
\end{equation}
\end{lemma}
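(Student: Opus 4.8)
The plan is to establish the lemma as a pointwise algebraic identity, valid almost everywhere in $\Pi$, obtained by expanding all the derivatives via the product and chain rules for weak derivatives. These rules apply because $\mu$ is bounded, separated from zero and lies in $\widetilde W^2_{3/2}(\Pi)$; in particular $\mu^{-1}$ and $\log\mu$ belong to $\widetilde W^2_{3/2}(\Pi)$ as well, with $\n(\log\mu)=\mu^{-1}\n\mu$ and $\dd_j\dd_k(\log\mu)=\mu^{-1}\dd_j\dd_k\mu-\mu^{-2}\dd_j\mu\,\dd_k\mu$.

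First I would treat the two terms carrying the scalar factor multiplying $v$. Since $\div(\mu^{-1}\n\mu)=\sum_j\dd_j(\mu^{-1}\dd_j\mu)=\mu^{-1}\D\mu-\mu^{-2}|\n\mu|^2$ a.e., the sum of the first two terms on the left-hand side equals $(\mu^{-1}\D\mu)\,v$, which accounts for the diagonal contribution $\mu^{-1}\D\mu\,\de_{jk}$ in \eqref{w0}. Next I would compute, for each $j=1,2,3$, the $j$-th component of the convective term, $\bigl(\<v,\n\>(\mu^{-1}\n\mu)\bigr)_j=\sum_k v_k\,\dd_k(\mu^{-1}\dd_j\mu)=\sum_k v_k\bigl(\mu^{-1}\dd_j\dd_k\mu-\mu^{-2}\dd_j\mu\,\dd_k\mu\bigr)$. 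Subtracting this from $(\mu^{-1}\D\mu)\,v_j$ leaves exactly $\sum_k\bigl(\mu^{-1}\D\mu\,\de_{jk}+\mu^{-2}\dd_j\mu\,\dd_k\mu-\mu^{-1}\dd_j\dd_k\mu\bigr)v_k=(W_0(\mu)v)_j$, which is the assertion.

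Finally I would check that all expressions are genuinely meaningful. By the three-dimensional embedding $\widetilde W^1_{3/2}(\Pi)\subset\widetilde L_3(\Pi)$ we have $\n\mu\in\widetilde L_3(\Pi)$, so every entry of $W_0(\mu)$ — each of the form $\mu^{-1}\D\mu$, $\mu^{-2}\dd_j\mu\,\dd_k\mu$ or $\mu^{-1}\dd_j\dd_k\mu$ — lies in $\widetilde L_{3/2}(\Pi)$, and multiplication by $v\in H^1(\Pi,\C^3)\subset L_{6,\mathrm{loc}}(\Pi)$ places both sides of the identity in $L_{1,\mathrm{loc}}(\Pi)$. I do not expect a real obstacle here: the substance of the lemma is the bookkeeping of the two computations above, and the only point requiring a little care is the justification of the weak-derivative calculus for $\mu^{-1}$ and $\log\mu$, which follows from $\mu\in\widetilde W^2_{3/2}(\Pi)$ together with \eqref{01}.
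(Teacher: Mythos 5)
Your proposal is correct and coincides with what the paper intends: the paper dismisses this lemma with ``verified by direct computation,'' and your computation --- expanding $\div(\mu^{-1}\n\mu)=\mu^{-1}\D\mu-\mu^{-2}|\n\mu|^2$ so that the first two terms collapse to $\mu^{-1}\D\mu\,v$, then expanding the $j$-th component of $\<v,\n\>(\mu^{-1}\n\mu)$ by the product rule --- is exactly that computation, carried out correctly. Your added justification of the weak-derivative calculus and the $L_{3/2}$-membership of the entries of $W_0(\mu)$ via $W^1_{3/2}\subset L_3$ is sound and only makes explicit what the paper leaves to the reader.
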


This lemma is verified by direct computation.

On the boundary of the cylinder $ \Pi $ we introduce the function $ \kappa $ as follows: for the point
$$
x = (x_1, x_2, x_3) \in \dd U \times \R
$$ 
the value $ \kappa (x) $ is equal to the curvature of the curve $ \dd U $ at the point $ (x_1, x_2) $. In other words, $ \kappa (x) $ is the mean curvature (the sum of the principal curvatures)
of the surface $ \dd \Pi $ at the point $ x $.

\begin{lemma}
\label{l23}
Let the functions $a, b \in C^1 (\overline\Pi, \C^3)$.
On the boundary $ \dd \Pi $ we consider the function
\begin{equation}
\label{i1}
I(x) \equiv \nu_k(x) a_k(x) \dd_j \overline b_j(x) 
- \nu_j(x) a_k(x) \dd_k \overline b_j(x),
\end{equation}
$\nu(x)$ is the unit outer normal.
If the boundary conditions 
\begin{equation}
\label{i2}
\left.a_\nu\right|_{\dd\Pi} = \left.b_\nu\right|_{\dd\Pi} = 0
\end{equation}
or
\begin{equation}
\label{i3}
\left.a_\tau\right|_{\dd\Pi} = \left.b_\tau\right|_{\dd\Pi} = 0 ,
\end{equation}
are satisfied, then
$$
I(x) = \kappa (x) \< P_{e_3^\perp} a(x), b(x) \> ,
$$
where $P_{e_3^\perp}$ is the projection onto the plane orthogonal to the axis of the cylinder.
\end{lemma}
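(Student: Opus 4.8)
The plan is to rewrite $I$ so that the curvature enters through the Hessian of the distance function. On the $C^2$-smooth part of $\dd\Pi$ I would extend the outer unit normal to a one-sided tubular neighbourhood by $\nu=\nabla d$, where $d$ is the signed distance to $\dd\Pi$; since $\dd U\in C^2$ there this is a $C^1$ field, and $d$ does not depend on $x_3$, so $d(x)=d_U(x_1,x_2)$ with $d_U$ the planar signed distance to $\dd U$. Because $|\nabla d|\equiv1$, the symmetric matrix $H=(\dd_j\dd_k d)$ satisfies $H\nu=0$. Using $\nu_j\dd_k\bar b_j=\dd_k(\nu\cdot\bar b)-(\dd_k\nu_j)\bar b_j$ and contracting \eqref{i1} with $a_k$ gives
$$
I=(a\cdot\nu)\,\div\bar b-(a\cdot\nabla)(\nu\cdot\bar b)+\langle Ha,b\rangle ,
$$
an identity between quantities on $\dd\Pi$ that does not depend on the chosen $C^1$-extension of $\bar b$.

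Next I would record the geometry of $H$ on $\dd\Pi$. Since $d$ is independent of $x_3$ one has $He_3=0$; and from the arc-length parametrisation of $\dd U$, using $\tfrac{d}{ds}\nu_U=\kappa\tau$, the planar Hessian of $d_U$ equals $\kappa$ on the unit tangent $\tau$ and $0$ on $\nu_U$. Writing $\hat\tau=(\tau,0)$, this says $H\nu=0$, $He_3=0$, $H\hat\tau=\kappa\hat\tau$ on $\dd\Pi$, i.e. $H=\kappa\,\hat\tau\otimes\hat\tau$ there (equivalently $\tr H=\Delta d=\kappa$, the sum of the principal curvatures $\kappa$ and $0$ of $\dd\Pi$). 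Consequently $\langle Ha,b\rangle=\kappa\langle P_{e_3^\perp}a,b\rangle$ whenever $a$ and $b$ are both tangent to $\dd\Pi$, and $\langle Ha,b\rangle=0$ whenever $a$ is normal.

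Then I would split into the two cases. If \eqref{i2} holds, then $a\cdot\nu=0$ kills the first term; moreover $\nu\cdot\bar b=\overline{b_\nu}\equiv0$ along $\dd\Pi$ while $a$ is tangent to $\dd\Pi$, so $(a\cdot\nabla)(\nu\cdot\bar b)=0$ on $\dd\Pi$; hence $I=\langle Ha,b\rangle=\kappa\langle P_{e_3^\perp}a,b\rangle$ by the previous step. If \eqref{i3} holds, then $a=a_\nu\nu$ on $\dd\Pi$, so $\langle Ha,b\rangle=0$ and $I=a_\nu\bigl(\div\bar b-(\nu\cdot\nabla)(\nu\cdot\bar b)\bigr)$; using $H\nu=0$ once more, $(\nu\cdot\nabla)(\nu\cdot\bar b)=\nu_j\nu_k\dd_k\bar b_j$, so the bracket is the tangential divergence $(\delta_{jk}-\nu_j\nu_k)\dd_k\bar b_j$. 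Decomposing $\bar b=(\nu\cdot\bar b)\nu+\bar c$ with $\bar c$ the (extended) tangential part: the tangential divergence of the normal part contributes $(\nu\cdot\bar b)\,\tr H=\kappa\,\overline{b_\nu}$, while the tangential divergence of $\bar c$ vanishes on $\dd\Pi$ because $\bar c\equiv0$ there (as $b_\tau|_{\dd\Pi}=0$) and that operation is intrinsic to $\dd\Pi$. Thus $I=\kappa\,a_\nu\overline{b_\nu}=\kappa\langle P_{e_3^\perp}a,b\rangle$, since $P_{e_3^\perp}\nu=\nu$.

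The only genuine work is the differential-geometric bookkeeping of the second paragraph: verifying $H=\kappa\,\hat\tau\otimes\hat\tau$ on $\dd\Pi$ with the correct sign of $\kappa$, and correctly recognising the term $\div\bar b-(\nu\cdot\nabla)(\nu\cdot\bar b)$ in case \eqref{i3} as a tangential divergence, so that exactly the factor $\kappa$ is produced. The algebra and the uses of the boundary conditions are routine; one should also note that the argument lives on the smooth part of $\dd\Pi$, where $\kappa$ is defined (the edges form a null set for the surface measure, which is all that is needed in the applications of this lemma).
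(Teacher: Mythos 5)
Your proof is correct and is essentially the paper's own argument in different packaging: the paper likewise differentiates the vanishing boundary quantities ($\langle\nu,b\rangle$ in case \eqref{i2}, $\langle\tau,b\rangle$ in case \eqref{i3}) along tangential directions and invokes the Frenet--Serret relation $\dd_\tau\nu=\kappa\tau$, which is exactly your statement $H=\kappa\,\hat\tau\otimes\hat\tau$ for the Hessian of the signed distance. The only cosmetic differences are your unified identity $I=(a\cdot\nu)\div\bar b-(a\cdot\nabla)(\nu\cdot\bar b)+\langle Ha,b\rangle$ before splitting into cases, and your handling of case \eqref{i3} via the tangential divergence and $\tr H$ in place of the paper's explicit decomposition $\dd_j=\nu_j\nu_k\dd_k+\tau_j\tau_k\dd_k$.
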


\begin{proof}
1) Consider the case \eqref{i2}.
We have
$$
I(x) = - \nu_j a_k \dd_k \overline b_j = \overline b_j a_k \dd_k \nu_j 
$$
since $ \<\nu, b \> = 0 $ on $ \dd \Pi $, and this equality can be differentiated along the tangent vector $ a $.
It is also clear that $ \dd_3 \nu_j = 0 $.
Thus,
$I(x) = \< \dd_{\tilde a} \nu, b\>$,
where $\tilde a = P_{e_3^\perp} a$.
Condition $a_\nu = 0$ implies $\tilde a = e^{i\te} |\tilde a| \tau$,
where $\te (x) \in \R$, $\tau_1 = \nu_2$, $\tau_2 = - \nu_1$.
By the FrenetЦSerret formulas
$$
\dd_{\tilde a} \nu = e^{i\te} |\tilde a| \dd_\tau \nu = 
e^{i\te} |\tilde a| \kappa (x) \tau .
$$

Therefore,
$I(x) = \kappa (x) \< \tilde a, b \>$.

2) Now consider the case \eqref{i3}.
We have
$a = e^{i\te} |a| \nu$ and
$$
\nu_j \nu_k \dd_k + \tau_j \tau_k \dd_k = \dd_j, \qquad j=1,2,
$$
where $\tau_1 = \nu_2$, $\tau_2 = - \nu_1$, $\tau_3 = \nu_3 = 0$.
Hence,
$$
I(x)= e^{i\te} |a| \left(\nu_k \nu_k \dd_j - \nu_j \nu_k \dd_k\right) \overline b_j
= e^{i\te} |a| \tau_j \tau_k \dd_k \overline b_j,
$$
where we took into account that $ b_3 = 0 $.
Furthermore, $\dd_\tau (\tau_j \overline b_j) = 0$
since $\tau_j \overline b_j = 0$ along $\dd\Pi$.
Therefore, again according to the FrenetЦSerret formulas
$$
I(x) = - e^{i\te} |a| \overline b_j \dd_\tau \tau_j  
= e^{i\te} |a| \overline b_j \kappa(x) \nu_j = \kappa (x) \< a, b \> .\quad
\qedhere
$$
\end{proof}

\begin{lemma}
\label{l22}
Let $a, b \in H^1 (\Pi, \C^3)$ and the boundary conditions 
\eqref{i2} or \eqref{i3} are satisfied.
Then
$$
\int_\Pi \< \rot a, \rot b \> dx 
= \int_\Pi \< \dd_j a, \dd_j b \> dx 
- \int_\Pi \< \div a, \div b \> dx 
+ \int_{\dd\Pi} \kappa (x) \< P_{e_3^\perp} a(x), b(x) \> dS .
$$
\end{lemma}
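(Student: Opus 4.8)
The plan is to reduce the identity to the pointwise curl identity and then to Lemma~\ref{l23}. For smooth vector fields $a,b$ a direct computation with the Levi-Civita symbol gives
$$
\<\rot a,\rot b\> = \<\dd_j a,\dd_j b\> - \dd_j a_k\,\dd_k\overline{b_j},
$$
and, rewriting the last term via the product rule together with $\dd_j\dd_k\overline{b_j}=\dd_k\overline{\div b}$,
$$
\dd_j a_k\,\dd_k\overline{b_j}
= \dd_j\bigl(a_k\,\dd_k\overline{b_j}\bigr) - \dd_k\bigl(a_k\,\overline{\div b}\bigr) + (\div a)\,\overline{\div b}.
$$
Integrating the resulting identity over $\Pi$ and applying the divergence theorem to the two full-divergence terms — the contributions from $x_3\to\pm\infty$ vanish because $a,b\in L_2(\Pi)$, so one passes to the limit along cross-sections on which the flux of the relevant $\widetilde{W^1_1}$-field tends to zero — produces
$$
\int_\Pi\<\rot a,\rot b\>\,dx
= \int_\Pi\<\dd_j a,\dd_j b\>\,dx - \int_\Pi\<\div a,\div b\>\,dx
+ \int_{\dd\Pi}\bigl(\nu_k a_k\,\dd_j\overline{b_j} - \nu_j a_k\,\dd_k\overline{b_j}\bigr)\,dS.
$$
The boundary integrand is precisely the function $I(x)$ of \eqref{i1}; under either boundary condition \eqref{i2} or \eqref{i3} Lemma~\ref{l23} replaces it by $\kappa(x)\,\<P_{e_3^\perp}a(x),b(x)\>$, which is the asserted formula.

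It then remains to drop the smoothness assumption on $a,b$. Both sides of the identity are bounded bilinear forms on $H^1(\Pi,\C^3)\times H^1(\Pi,\C^3)$: the three volume terms because $\|\rot a\|_{L_2}$, $\|\div a\|_{L_2}$, $\|\dd_j a\|_{L_2}\le\|a\|_{H^1(\Pi)}$, and the boundary term because it is dominated by $\|\kappa\|_{L_\infty(\dd\Pi)}\|a\|_{L_2(\dd\Pi)}\|b\|_{L_2(\dd\Pi)}$, with $\|a\|_{L_2(\dd\Pi)}\le C\|a\|_{H^1(\Pi)}$ by the trace theorem and $\kappa$ bounded because it is continuous on each of the finitely many $C^2$-arcs of $\dd U$. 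Hence it is enough to know that the smooth fields obeying \eqref{i2} (respectively \eqref{i3}) are dense in $\{a\in H^1(\Pi,\C^3):a_\nu|_{\dd\Pi}=0\}$ (respectively $\{a\in H^1(\Pi,\C^3):a_\tau|_{\dd\Pi}=0\}$). I would obtain this by first mollifying in $x_3$ — harmless, since the constraint is pointwise linear in $a$ with $x_3$-independent coefficients and $\Pi$ is invariant under shifts along $e_3$ — and then approximating the cross-sectional profiles in $H^1(U)$, using that $U$ is convex with piecewise $C^2$ boundary.

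The main obstacle is precisely this density step near the finitely many corner points of $\dd U$. The curl manipulation genuinely involves a second derivative of $b$, and Lemma~\ref{l23} is stated only for $C^1$-fields, so one really must approximate, and must do so without destroying the constraint $a_\nu=0$ (or $a_\tau=0$) on $\dd\Pi$. On a $C^2$ portion of the boundary this is the classical approximation of tangential (respectively normal) $H^1$-fields by smooth ones; near a corner one argues by hand, or one uses the $H^1$-description of $\dom{\cal M}$ and the trace information from Theorem~\ref{t11} and Remark~\ref{r12}. Everything else is routine bookkeeping with the summation convention and the divergence theorem on the infinite cylinder.
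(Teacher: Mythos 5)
Your proof is correct and follows essentially the same route as the paper: reduce to smooth fields, establish the integration-by-parts identity whose boundary integrand is exactly $I(x)$ from \eqref{i1}, and invoke Lemma~\ref{l23}. The paper simply states the smooth-field identity as well known and asserts that it suffices to treat smooth $a,b$; your Levi-Civita computation and your discussion of the density/approximation step (including the honest flagging of the corner issue) fill in details the paper omits.
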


\begin{proof}
It suffices to prove the statement for smooth functions $ a, b $.
For such functions, it is well known that
$$
\int_\Pi \< \rot a, \rot b \> dx 
= \int_\Pi \< \dd_j a, \dd_j b \> dx 
- \int_\Pi \< \div a, \div b \> dx 
+ \int_{\dd\Pi} I (x) \, dS (x) ,
$$
 where the expression $I(x)$ is defined in \eqref{i1}.
It remains to refer to the previous lemma.
\end{proof}

\begin{lemma}
\label{l25}
Let $a, b \in H^1 (\Pi, \C^3)$, $\mu$ satisfy \eqref{01}
and $\mu \in \widetilde W_{3/2}^2 (\Pi)$.
Then
\begin{eqnarray*}
\int_\Pi \< \dd_j (\mu a), \dd_j (\mu^{-1} b) \> dx 
= \int_\Pi \< \dd_j (\mu^{1/2} a), \dd_j (\mu^{-1/2} b) \> dx 
- \int_\Pi \mu^{-1} \dd_j \mu \< \dd_j a, b \> dx \\
- \int_\Pi \left((4\mu^2)^{-1} |\n\mu|^2 + (2\mu)^{-1} \D \mu\right) \<a, b\> dx
+ \int_{\dd\Pi} (2\mu)^{-1} \dd_\nu \mu \<a, b\> dS .
\end{eqnarray*}
\end{lemma}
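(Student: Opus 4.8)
The plan is to reduce to the case of smooth data, expand both scalar products by the Leibniz rule, and finish with a single integration by parts.

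Since $\mu$ is bounded away from $0$ and $\infty$ and $\mu\in\widetilde W^2_{3/2}(\Pi)$, all four functions $\mu,\mu^{-1},\mu^{1/2},\mu^{-1/2}$ belong to $\widetilde W^1_3(\Pi)\cap L_\infty(\Pi)$, so that $\mu a,\ \mu^{-1}b,\ \mu^{1/2}a,\ \mu^{-1/2}b\in H^1(\Pi,\C^3)$ by the Leibniz rule. Moreover, every term occurring in the asserted identity is integrable over $\Pi$, resp.\ over $\dd\Pi$, by H\"older's inequality combined with the embeddings $H^1(\Pi)\hookrightarrow L_6(\Pi)$, $\widetilde W^1_{3/2}(\Pi)\hookrightarrow\widetilde L_3(\Pi)$, the inclusion $\D\mu\in\widetilde L_{3/2}(\Pi)$, and the trace embeddings $H^1(\Pi)\hookrightarrow L_4(\dd\Pi)$, $\widetilde W^1_{3/2}(\Pi)\hookrightarrow\widetilde L_2(\dd\Pi)$. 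Hence both sides of the claimed equality depend continuously on $(\mu,a,b)$ in the norms of $\widetilde W^2_{3/2}(\Pi)\times H^1(\Pi)\times H^1(\Pi)$, and it suffices to prove the identity for $\mu\in C^\infty(\overline\Pi)$ satisfying \eqref{01} and $a,b\in C^\infty(\overline\Pi,\C^3)$ that are compactly supported along the axis; the general case then follows by mollification (respecting the period) and by density of such $a,b$ in $H^1(\Pi,\C^3)$.

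For such smooth data I would substitute $\dd_j(\mu a)=(\dd_j\mu)a+\mu\,\dd_j a$, $\dd_j(\mu^{-1}b)=-\mu^{-2}(\dd_j\mu)b+\mu^{-1}\dd_j b$, together with the analogous formulas with $\mu^{\pm1/2}$, into the two scalar products and subtract. The terms $\<\dd_j a,\dd_j b\>$ cancel, leaving the pointwise identity
\[
\<\dd_j(\mu a),\dd_j(\mu^{-1}b)\>-\<\dd_j(\mu^{1/2}a),\dd_j(\mu^{-1/2}b)\>
=-\tfrac12\mu^{-1}(\dd_j\mu)\<\dd_j a,b\>+\tfrac12\mu^{-1}(\dd_j\mu)\<a,\dd_j b\>-\tfrac34\mu^{-2}|\n\mu|^2\<a,b\>.
\]
Next I would use $\<a,\dd_j b\>=\dd_j\<a,b\>-\<\dd_j a,b\>$ to rewrite the middle term, which turns the right-hand side into $-\mu^{-1}(\dd_j\mu)\<\dd_j a,b\>+\tfrac12\mu^{-1}(\dd_j\mu)\,\dd_j\<a,b\>-\tfrac34\mu^{-2}|\n\mu|^2\<a,b\>$, and then integrate over $\Pi$ and integrate by parts in the single total-derivative term $\int_\Pi\tfrac12\mu^{-1}(\dd_j\mu)\,\dd_j\<a,b\>\,dx$. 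Here there is no contribution from the ends of the cylinder because $a,b$ have compact support along the axis, on $\dd\Pi$ one has $(\dd_j\mu)\nu_j=\dd_\nu\mu$, and $\dd_j(\mu^{-1}\dd_j\mu)=\mu^{-1}\D\mu-\mu^{-2}|\n\mu|^2$. Collecting the coefficients of $\mu^{-2}|\n\mu|^2\<a,b\>$ (with $\tfrac12-\tfrac34=-\tfrac14$), of $\mu^{-1}\D\mu\<a,b\>$, of $\mu^{-1}(\dd_j\mu)\<\dd_j a,b\>$, and the surface term $(2\mu)^{-1}\dd_\nu\mu\<a,b\>$ yields exactly the right-hand side of the lemma.

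The computation is purely mechanical bookkeeping of coefficients. The only step that is not entirely routine is the reduction to smooth data — that is, checking that each of the finitely many products appearing on both sides is genuinely integrable on $\Pi$, resp.\ on $\dd\Pi$, under the stated hypotheses, so that the identity persists in the limit. As for Lemmas~\ref{l21}--\ref{l24}, this is a standard consequence of the Sobolev and trace embedding theorems in the three-dimensional cylinder.
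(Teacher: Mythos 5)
Your proposal is correct and follows essentially the same route as the paper: both proofs are a Leibniz-rule expansion of the two scalar products (the paper organizes it via the factorizations $\mu a=\mu^{1/2}(\mu^{1/2}a)$, $\mu^{-1}b=\mu^{-1/2}(\mu^{-1/2}b)$, you expand fully and subtract, which gives the same four cross terms) followed by a single integration by parts in the one total-derivative term, and your coefficient bookkeeping ($\tfrac12-\tfrac34=-\tfrac14$, the $-(2\mu)^{-1}\D\mu$ term, and the boundary term $(2\mu)^{-1}\dd_\nu\mu\<a,b\>$) checks out. The explicit reduction to smooth data via the stated embeddings is left implicit in the paper but is correct as you present it.
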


\begin{proof}
We differentiate the products
$\mu^{1/2} (\mu^{1/2} a)$ and $\mu^{-1/2} (\mu^{-1/2} b)$ :
\begin{eqnarray*}
\int_\Pi \< \dd_j (\mu a), \dd_j (\mu^{-1} b) \> dx 
= \int_\Pi \< \dd_j (\mu^{1/2} a), \dd_j (\mu^{-1/2} b) \> dx 
+ \int_\Pi \dd_j (\mu^{1/2}) \< a, \dd_j (\mu^{-1/2} b) \> dx \\
+ \int_\Pi \dd_j (\mu^{-1/2}) \< \dd_j (\mu^{1/2} a), b \> dx 
+  \int_\Pi \dd_j (\mu^{1/2}) \dd_j (\mu^{-1/2}) \<a, b\> dx
=: J_1 + J_2 + J_3 + J_4 .
\end{eqnarray*}
In the second term we integrate by parts:
\begin{eqnarray*}
J_2 = - \int_\Pi \< \dd_j (\dd_j (\mu^{1/2}) a), \mu^{-1/2} b \> dx 
+ \int_{\dd\Pi} \nu_j \dd_j (\mu^{1/2}) \< a, \mu^{-1/2} b \> dS \\
= - \int_\Pi \left( ((2\mu)^{-1} \D\mu - (4\mu^2)^{-1} |\n\mu|^2) \<a,b\> 
+ (2\mu)^{-1} \dd_j \mu \<\dd_j a, b\> \right) dx 
+ \int_{\dd\Pi} (2\mu)^{-1} \dd_\nu \mu \<a, b\> dS .
\end{eqnarray*}
Furthermore,
$$
J_3 = \int_\Pi \left( - \frac{|\n\mu|^2}{4\mu^2} \<a,b\> - 
\frac{\dd_j\mu}{2\mu}\<\dd_j a, b\> \right) dx, \qquad
J_4 = - \int_\Pi \frac{|\n\mu|^2}{4\mu^2} \<a,b\> dx .
$$
Adding up, we get the result.
\end{proof}

\section{Integration by parts}

\begin{theorem}
\label{t31}
Let $v, f \in H^1 (\Pi, \C^3)$, $\div (\mu v) = 0$. 
Let $\er$ and $\mu$ satisfy \eqref{01} and
$\er, \mu \in \widetilde W^2_{3/2} (\Pi)$.
Then
\begin{eqnarray}
\label{31}
\int_\Pi \er^{-1} \<\rot v, \rot f\> dx 
= \int_\Pi \<\rot (\mu v), \rot ((\er\mu)^{-1} f) \> dx 
- \int_\Pi \< \rot v, \mu [\n (\er\mu)^{-1}, f]\> dx \\
+ \int_\Pi \er^{-1} \< W_0(\mu) v, f\> dx
+ \int_\Pi (\er\mu)^{-1} \dd_j \mu \< \dd_j v, f\> dx
+ \int_{\dd\Pi} (\er\mu)^{-1} \<[\n\mu, v], [f, \nu]\> dS ,
\nonumber 
\end{eqnarray}
where $W_0 (\mu)$ is defined in \eqref{w0}.
\end{theorem}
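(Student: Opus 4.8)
The plan is to establish \eqref{31} by transforming its right-hand side with the help of the curl product rule of Lemma \ref{l21}a. Since $\mu,(\er\mu)^{-1}\in\widetilde W^1_3\cap L_\infty$ (by \eqref{01} and the three-dimensional embedding $\widetilde W^2_{3/2}\subset\widetilde W^1_3$), that lemma gives
$$\rot(\mu v)=\mu\rot v+[\n\mu,v],\qquad \rot\big((\er\mu)^{-1}f\big)=(\er\mu)^{-1}\rot f+[\n(\er\mu)^{-1},f],$$
both in $L_2$. Expanding $\<\rot(\mu v),\rot((\er\mu)^{-1}f)\>$ into four terms and noting that the cross term is $\mu(\er\mu)^{-1}\<\rot v,\rot f\>=\er^{-1}\<\rot v,\rot f\>$, we may solve for the left-hand side of \eqref{31}:
$$\int_\Pi\er^{-1}\<\rot v,\rot f\>\,dx=\int_\Pi\<\rot(\mu v),\rot((\er\mu)^{-1}f)\>\,dx-\int_\Pi\<\rot v,\mu[\n(\er\mu)^{-1},f]\>\,dx+C+D,$$
where the first two integrals on the right are already the first two terms of \eqref{31}, and
$$C:=-\int_\Pi(\er\mu)^{-1}\<[\n\mu,v],\rot f\>\,dx,\qquad D:=-\int_\Pi\<[\n\mu,v],[\n(\er\mu)^{-1},f]\>\,dx;$$
all integrands lie in $L_1$ by the embedding theorems.

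To treat $C$ we move the scalar inside the product, $(\er\mu)^{-1}[\n\mu,v]=[(\er\mu)^{-1}\n\mu,v]$. Here $(\er\mu)^{-1}\n\mu\in\widetilde W^1_{3/2}$ (differentiate the product and apply the embedding theorems) and $v\in H^1$, so Lemma \ref{l24} applies and integration by parts moves the curl onto this field:
$$C=-\int_\Pi\<\rot\big([(\er\mu)^{-1}\n\mu,v]\big),f\>\,dx+\int_{\dd\Pi}(\er\mu)^{-1}\<[\n\mu,v],[f,\nu]\>\,dS,$$
in which the surface integral is precisely the last term of \eqref{31}. To the remaining volume integral we apply Lemma \ref{l21}b with $a=v$ and $b=(\er\mu)^{-1}\n\mu$; among the four terms it produces, the one coming from $\<b,\n\>a$ is $(\er\mu)^{-1}\dd_j\mu\<\dd_j v,f\>$, which gives the fourth term of \eqref{31}.

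After these reductions \eqref{31} becomes equivalent to the identity
$$\int_\Pi\Big(\div b\,\<v,f\>-\div v\,\<b,f\>-\<\<v,\n\>b,f\>\Big)\,dx+D=\int_\Pi\er^{-1}\<W_0(\mu)v,f\>\,dx,\qquad b=(\er\mu)^{-1}\n\mu,$$
which we would prove pointwise. We substitute the divergence-free condition in the form $\div v=-\mu^{-1}\<\n\mu,v\>$, expand $D$ through the Lagrange identity $\<[\n\mu,v],[\n(\er\mu)^{-1},f]\>=\<\n\mu,\n(\er\mu)^{-1}\>\<v,f\>-\<\n\mu,f\>\<v,\n(\er\mu)^{-1}\>$, and write out $\div b$ and $\<v,\n\>b$ by the Leibniz rule. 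The term $\<\n(\er\mu)^{-1},\n\mu\>\<v,f\>$ arising from $\div b$ cancels against its counterpart in $D$, and the term $(\<v,\n\>(\er\mu)^{-1})\<\n\mu,f\>$ arising from $\<v,\n\>b$ cancels against the other term of $D$; in particular all derivatives of $\er$ drop out. What survives is $\er^{-1}$ times $\mu^{-1}\D\mu\,\<v,f\>+\mu^{-2}\<\n\mu,v\>\<\n\mu,f\>-\mu^{-1}(\dd_j\dd_k\mu)v_jf_k$, which equals $\er^{-1}\<W_0(\mu)v,f\>$ by \eqref{w0}, or equivalently by Lemma \ref{l26} taken in the inner product with $f$. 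The main obstacle is exactly this last step: tracking the signs through the two applications of the cross-product identities and checking that the $\n\er$- and $\n(\er\mu)^{-1}$-terms really cancel in pairs. Everything else is the routine verification, via the embedding theorems, that every integral written down converges absolutely.
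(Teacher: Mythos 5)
Your argument is correct and is essentially the paper's own proof: both expand the curls by the product rule of Lemma \ref{l21}\,a), integrate by parts once via Lemma \ref{l24} to produce the boundary term $\int_{\dd\Pi}(\er\mu)^{-1}\<[\n\mu,v],[f,\nu]\>dS$, and reduce the remaining volume terms to $\er^{-1}\<W_0(\mu)v,f\>$ through the double-curl identity of Lemma \ref{l21}\,b) together with $\div(\mu v)=0$. The only difference is bookkeeping: the paper keeps $\rot((\er\mu)^{-1}f)$ intact in the integration by parts and applies Lemma \ref{l21}\,b) to $[\mu^{-1}\n\mu,\mu v]$ so that Lemma \ref{l26} yields $W_0(\mu)$ in one stroke, whereas you split off the extra term $D$ and verify the same cancellations by hand via the Lagrange identity --- a reorganization, not a different method.
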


\begin{proof}
We have
\begin{eqnarray*}
\int_\Pi \er^{-1} \<\rot v, \rot f\> dx 
= \int_\Pi \left\<\mu \rot v, \rot ((\er\mu)^{-1} f) 
- [\n(\er\mu)^{-1}, f]\right\> dx \\
= - \int_\Pi \left\<\mu\rot v,  [\n(\er\mu)^{-1}, f]\right\> dx
+ \int_\Pi \left\<\rot (\mu v), \rot ((\er\mu)^{-1} f) \right\> dx \\
- \int_\Pi \left\<[\n\mu, v], \rot ((\er\mu)^{-1} f) \right\> dx,
\end{eqnarray*}
where we used the Lemma \ref{l21} р) twice.

In the last summand, we apply the Lemma \ref{l24}:
\begin{eqnarray*}
- \int_\Pi \left\<[\n\mu, v], \rot ((\er\mu)^{-1} f) \right\> dx \\
= - \int_\Pi \left\<\rot [\n\mu, v], (\er\mu)^{-1} f \right\> dx
+ \int_{\dd\Pi} \< [\n \mu, v], [(\er\mu)^{-1} f, \nu]\> dS .
\end{eqnarray*}
In the first term on the right-hand side, we apply the Lemma \ref{l21} b) 
taking into account the equality $\div (\mu v)= 0$:
\begin{eqnarray*}
- \int_\Pi \left\<\rot [\mu^{-1} \n\mu, \mu v], (\er\mu)^{-1} f \right\> dx \\
= \int_\Pi \left\<\mu v \div (\mu^{-1} \n\mu) 
+ \<\mu^{-1} \n\mu, \n\> (\mu v) 
- \< \mu v, \n \> (\mu^{-1} \n\mu), (\er\mu)^{-1} f \right\> dx.
\end{eqnarray*}
It remains to use the Lemma \ref{l26}.
\end{proof}

Recall that our goal is to transform the Maxwell operator to the Schr\"odinger operator. The terms on the right-hand side of \eqref{31} that do not contain the derivatives $ v $ and $ f $, as well as the second summand containing $ \rot v $, are suitable for this purpose.
The first term on the right-hand side of \eqref {31} is transformed with the help of Lemmas \ref {l22} and \ref {l25} to the form
$$
\< \dd_j (\mu^{1/2} v), \dd_j (\er^{-1} \mu^{-1/2} f) \>,
$$
which corresponds to the Laplace operator.
The fourth term on the right-hand side of \eqref {31} containing $ \<\dd_j v, f \> $ cancels exactly.
It explains the choice of the exponents of $ \er $ and $ \mu $ in the expression
$\< \dd_j (\mu^{1/2} v), \dd_j (\er^{-1} \mu^{-1/2} f) \>$.

\begin{lemma}
\label{c32}
Let $v, f \in H^1 (\Pi, \C^3)$,
$\div (\mu v) = 0$,
$\left.v_\nu\right|_{\dd\Pi} = 0$,
$\left.f_\nu\right|_{\dd\Pi} = 0$.
Let $\er$ and $\mu$ satisfy \eqref{01}
and $\er, \mu \in \widetilde W_{3/2}^2 (\Pi)$.
Then
\begin{eqnarray*}
\int_\Pi \er^{-1} \<\rot v, \rot f\> dx 
= \int_\Pi \< \dd_j (\mu^{1/2} v), \dd_j (\er^{-1} \mu^{-1/2} f) \> dx 
- \int_\Pi \< \rot v, \mu [\n (\er\mu)^{-1}, f]\> dx \\
+ \int_\Pi \er^{-1} \< W(\mu) v, f\> dx
+ \int_{\dd\Pi} \left\<\left(\er^{-1} \kappa (x) P_{e_3^\perp}
- (2\er\mu)^{-1} \dd_\nu \mu\right) v, f\right\> dS .
\end{eqnarray*}
Here
\begin{equation}
\label{32}
W_{jk} (\mu) = \left((2\mu)^{-1} \D\mu - (4\mu^2)^{-1} |\n\mu|^2\right) \de_{jk} 
+ \mu^{-2} \dd_j \mu \dd_k \mu - \mu^{-1} \dd_j \dd_k \mu.
\end{equation}
\end{lemma}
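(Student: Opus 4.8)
The plan is to take the identity \eqref{31} of Theorem~\ref{t31} and rewrite its leading term $\int_\Pi\<\rot(\mu v),\rot((\er\mu)^{-1}f)\>\,dx$ by successively invoking Lemmas~\ref{l22} and~\ref{l25}. First I would note that $\er,\mu\in\widetilde W^2_{3/2}(\Pi)\subset\widetilde W^1_3(\Pi)\cap L_\infty(\Pi)$, so the products $\mu v$ and $(\er\mu)^{-1}f$ again belong to $H^1(\Pi,\C^3)$, and the hypotheses $v_\nu|_{\dd\Pi}=f_\nu|_{\dd\Pi}=0$ give $(\mu v)_\nu|_{\dd\Pi}=((\er\mu)^{-1}f)_\nu|_{\dd\Pi}=0$. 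Hence Lemma~\ref{l22} applies in the case \eqref{i2} with $a=\mu v$, $b=(\er\mu)^{-1}f$; since $\div(\mu v)=0$ the divergence term drops out, and using $\mu\cdot(\er\mu)^{-1}=\er^{-1}$ in the surface integral we obtain
$$
\int_\Pi\<\rot(\mu v),\rot((\er\mu)^{-1}f)\>\,dx
= \int_\Pi\<\dd_j(\mu v),\dd_j((\er\mu)^{-1}f)\>\,dx
+ \int_{\dd\Pi}\er^{-1}\kappa(x)\<P_{e_3^\perp}v,f\>\,dS .
$$

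Next I would apply Lemma~\ref{l25} with $a=v$, $b=\er^{-1}f$ (so that $\mu a=\mu v$ and $\mu^{-1}b=(\er\mu)^{-1}f$), which turns $\int_\Pi\<\dd_j(\mu v),\dd_j((\er\mu)^{-1}f)\>\,dx$ into the sum of the ``Laplacian'' term $\int_\Pi\<\dd_j(\mu^{1/2}v),\dd_j(\er^{-1}\mu^{-1/2}f)\>\,dx$, the term $-\int_\Pi\er^{-1}\mu^{-1}\dd_j\mu\<\dd_j v,f\>\,dx$, a volume potential term $-\int_\Pi\er^{-1}\big((4\mu^2)^{-1}|\n\mu|^2+(2\mu)^{-1}\D\mu\big)\<v,f\>\,dx$, and a surface term $\int_{\dd\Pi}(2\er\mu)^{-1}\dd_\nu\mu\<v,f\>\,dS$. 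Substituting both transformations back into \eqref{31}, the term $-\int_\Pi\er^{-1}\mu^{-1}\dd_j\mu\<\dd_j v,f\>\,dx$ exactly cancels the fourth summand $\int_\Pi(\er\mu)^{-1}\dd_j\mu\<\dd_j v,f\>\,dx$ of \eqref{31} --- this cancellation is precisely what pins down the exponents $\mu^{1/2}$ and $\er^{-1}\mu^{-1/2}$.

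It then remains to assemble the lower-order pieces. The volume potential contribution becomes $\int_\Pi\er^{-1}\<(W_0(\mu)-((4\mu^2)^{-1}|\n\mu|^2+(2\mu)^{-1}\D\mu)\1)v,f\>\,dx$, and a short computation with \eqref{w0} and \eqref{32} shows the matrix in the brackets equals $W(\mu)$. Three surface integrals remain: $\int_{\dd\Pi}\er^{-1}\kappa(x)\<P_{e_3^\perp}v,f\>\,dS$ and $\int_{\dd\Pi}(2\er\mu)^{-1}\dd_\nu\mu\<v,f\>\,dS$ produced above, and $\int_{\dd\Pi}(\er\mu)^{-1}\<[\n\mu,v],[f,\nu]\>\,dS$ carried over from \eqref{31}. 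Here I would use the vector identity $\<[\n\mu,v],[f,\nu]\>=\<\n\mu,f\>\<v,\nu\>-(\dd_\nu\mu)\<v,f\>$ together with $\<v,\nu\>=v_\nu=0$ on $\dd\Pi$, which reduces the last integral to $-\int_{\dd\Pi}(\er\mu)^{-1}(\dd_\nu\mu)\<v,f\>\,dS$; combined with the middle integral this gives $-\int_{\dd\Pi}(2\er\mu)^{-1}\dd_\nu\mu\<v,f\>\,dS$, and adding the curvature term produces precisely the boundary integral in the statement. I do not expect a genuine analytic obstacle: the only delicate points are checking that the products $\er^{\pm1}v$, $\mu^{\pm1/2}v$ etc.\ remain in $H^1$ so that the above lemmas may be applied, and careful sign-tracking --- in particular, the second use of the boundary condition $v_\nu=0$ is exactly the reduction of the $[\n\mu,v]$ surface term just described.
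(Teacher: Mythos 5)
Your proposal is correct and follows essentially the same route as the paper's own proof: apply Theorem \ref{t31}, transform the leading term via Lemma \ref{l22} (case \eqref{i2}) and Lemma \ref{l25} with $a=v$, $b=\er^{-1}f$, observe the cancellation of the $\<\dd_j v,f\>$ terms, and reduce the $[\n\mu,v]$ surface term using $v_\nu=0$. The only cosmetic difference is that you evaluate $\<[\n\mu,v],[f,\nu]\>$ by the Lagrange identity while the paper uses the triple cross product; both give $-\dd_\nu\mu\,\<v,f\>$.
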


\begin{proof}
By the Lemma \ref {l22} taking into account $ \div (\mu v) = 0 $ we have
$$
\int_\Pi \<\rot (\mu v), \rot ((\er\mu)^{-1} f)\> dx 
= \int_\Pi \<\dd_j (\mu v), \dd_j ((\er\mu)^{-1} f)\> dx
+ \int_{\dd\Pi} \kappa (x) \er^{-1} \< P_{e_3^\perp} v, f \> dS .
$$
By the Lemma \ref{l25} with $a=v$, $b=\er^{-1} f$ we have
\begin{eqnarray}
\label{**}
\int_\Pi \<\dd_j (\mu v), \dd_j ((\er\mu)^{-1} f)\> dx
= \int_\Pi \<\dd_j (\mu^{1/2} v), \dd_j (\er^{-1} \mu^{-1/2} f)\> dx 
- \int_\Pi (\er\mu)^{-1} \dd_j \mu \< \dd_j v, f\> dx \\
- \int_\Pi \left((4\er\mu^2)^{-1} |\n\mu|^2 + (2\er\mu)^{-1} \D \mu\right) \<v, f\> dx
+ \int_{\dd\Pi} (2\er\mu)^{-1} \dd_\nu \mu \<v, f\> dS .
\nonumber
\end{eqnarray}
Finally, it follows from the condition $\left.v_\nu\right|_{\dd\Pi} = 0$ that
$$
\<[\n\mu, v], [f, \nu]\> 
= \<[\nu, [\n\mu, v]], f\> = 
- \dd_\nu \mu \<v, f\>.
$$ 
Therefore, the last integral on the right-hand side of \eqref {31} is equal to
$$
\int_{\dd\Pi} (\er\mu)^{-1} \<[\n\mu, v], [f, \nu]\> dS
= - \int_{\dd\Pi} (\er\mu)^{-1} \dd_\nu \mu \<v, f\> dS . 
\quad \qedhere
$$
\end{proof}

\begin{lemma}
\label{c33}
Let $u, w \in H^1 (\Pi, \C^3)$,
$\div (\er u) = 0$,
$\left.u_\tau\right|_{\dd\Pi} = 0$,
$\left.w_\tau\right|_{\dd\Pi} = 0$.
Let $\er$ and $\mu$ satisfy \eqref{01}
and $\er, \mu \in \widetilde W_{3/2}^2 (\Pi)$.
Then
\begin{eqnarray*}
\int_\Pi \mu^{-1} \<\rot u, \rot w\> dx 
= \int_\Pi \< \dd_j (\er^{1/2} u), \dd_j (\er^{-1/2} \mu^{-1} w) \> dx 
- \int_\Pi \< \rot u, \er [\n (\er\mu)^{-1}, w]\> dx \\
+ \int_\Pi \mu^{-1} \< W(\er) u, w\> dx
+ \int_{\dd\Pi} \left(\mu^{-1} \kappa (x) 
+ (2\er\mu)^{-1} \dd_\nu \er \right) \<u, w\> dS ,
\end{eqnarray*}
where the matrix $ W (\er) $ is given by the formula \eqref {32}.
\end{lemma}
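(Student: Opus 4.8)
The plan is to deduce Lemma~\ref{c33} from Lemma~\ref{c32} by interchanging the roles of the coefficients $\er$ and $\mu$, which is legitimate because Theorem~\ref{t31} and Lemmas~\ref{l22}, \ref{l25} are stated generally enough to be applied after relabelling. First I would invoke Theorem~\ref{t31} with $\er$ replaced by $\mu$, $\mu$ replaced by $\er$, $v$ replaced by $u$ and $f$ replaced by $w$: its hypothesis then reads $\div(\er u)=0$, which holds, and both coefficients lie in $\widetilde W^2_{3/2}(\Pi)$, so the relabelled identity \eqref{31} is at our disposal. Its left-hand side becomes $\int_\Pi\mu^{-1}\<\rot u,\rot w\>\,dx$; its second term is already the $\rot u$-term $-\int_\Pi\<\rot u,\er[\n(\er\mu)^{-1},w]\>\,dx$ of Lemma~\ref{c33}; its third term is $\int_\Pi\mu^{-1}\<W_0(\er)u,w\>\,dx$ with $W_0$ as in \eqref{w0}; and its last boundary integral becomes $\int_{\dd\Pi}(\er\mu)^{-1}\<[\n\er,u],[w,\nu]\>\,dS$.

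Next I would transform the leading term $\int_\Pi\<\rot(\er u),\rot((\er\mu)^{-1}w)\>\,dx$ by Lemma~\ref{l22}, applied to $a=\er u$ and $b=(\er\mu)^{-1}w$: both lie in $H^1(\Pi,\C^3)$ by the embedding theorems, and both satisfy \eqref{i3} because $\er$ is a scalar function and $u_\tau|_{\dd\Pi}=w_\tau|_{\dd\Pi}=0$. The divergence term $\int_\Pi\<\div(\er u),\div((\er\mu)^{-1}w)\>\,dx$ drops out since $\div(\er u)=0$. Because the outer normal of the cylinder has $\nu_3=0$, the condition $u_\tau|_{\dd\Pi}=0$ forces $P_{e_3^\perp}u=u$ on $\dd\Pi$, so the boundary term produced by Lemma~\ref{l22} collapses to $\int_{\dd\Pi}\mu^{-1}\kappa(x)\<u,w\>\,dS$; this explains why the projection is absent from the statement of Lemma~\ref{c33}, in contrast to Lemma~\ref{c32} where the relevant vectors are tangential. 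The remaining Dirichlet integral $\int_\Pi\<\dd_j(\er u),\dd_j((\er\mu)^{-1}w)\>\,dx$ I would rewrite by Lemma~\ref{l25} with $\mu$ replaced by $\er$, $a=u$, $b=\mu^{-1}w$. This produces the Laplacian term $\<\dd_j(\er^{1/2}u),\dd_j(\er^{-1/2}\mu^{-1}w)\>$, a first-order term $-\int_\Pi(\er\mu)^{-1}\dd_j\er\<\dd_j u,w\>\,dx$ that exactly cancels the corresponding term of the relabelled \eqref{31}, a zeroth-order contribution that combines with $\mu^{-1}W_0(\er)u$ into $\mu^{-1}W(\er)u$ with $W$ given by \eqref{32}, and a boundary term $\int_{\dd\Pi}(2\er\mu)^{-1}\dd_\nu\er\<u,w\>\,dS$.

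It then remains only to dispose of the relabelled last boundary integral $\int_{\dd\Pi}(\er\mu)^{-1}\<[\n\er,u],[w,\nu]\>\,dS$: since $w_\tau|_{\dd\Pi}=0$, the vector $w$ is parallel to $\nu$ along $\dd\Pi$, hence $[w,\nu]=0$ there and this integral vanishes identically. This is precisely the step at which the argument diverges from the proof of Lemma~\ref{c32}, where the condition $v_\nu=0$ merely reduced the analogous integral to $-\int_{\dd\Pi}(\er\mu)^{-1}\dd_\nu\mu\<v,f\>\,dS$; here it disappears altogether, which is why the boundary coefficient in Lemma~\ref{c33} carries $+(2\er\mu)^{-1}\dd_\nu\er$ rather than a minus sign. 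Adding the two surviving boundary contributions gives $\int_{\dd\Pi}(\mu^{-1}\kappa(x)+(2\er\mu)^{-1}\dd_\nu\er)\<u,w\>\,dS$, and collecting all the terms produces the asserted identity. I expect no essential obstacle beyond the careful bookkeeping of which boundary terms survive under \eqref{i3} and of the sign of the $\dd_\nu\er$ contribution; the interior part of the computation is simply the mirror image under $\er\leftrightarrow\mu$ of the one already carried out for Lemma~\ref{c32}.
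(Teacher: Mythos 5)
Your proposal is correct and follows essentially the same route as the paper's own proof: Theorem~\ref{t31} with $\er$ and $\mu$ interchanged and $v=u$, $f=w$, the observation that the boundary integral of \eqref{31} vanishes because $w_\tau|_{\dd\Pi}=0$ forces $[w,\nu]=0$, then Lemma~\ref{l22} (under \eqref{i3}, with the projection dropping out since $u$ is normal and $\nu_3=0$) followed by Lemma~\ref{l25} with $a=u$, $b=\mu^{-1}w$, and the same cancellation of the first-order term and assembly of $W(\er)$ and the boundary coefficient. Your explicit justifications of why $P_{e_3^\perp}u=u$ on $\dd\Pi$ and of the sign of the $\dd_\nu\er$ term are correct and slightly more detailed than the paper's.
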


\begin{proof}
We apply the Theorem \ref {t31} with $ v = u $, $ f = w $ and with $ \er $ and $ \mu $ interchanged.
We get
\begin{eqnarray*}
\int_\Pi \mu^{-1} \<\rot u, \rot w\> dx 
= \int_\Pi \< \rot (\er u), \rot ((\er\mu)^{-1} w) \> dx 
- \int_\Pi \< \rot u, \er [\n (\er\mu)^{-1}, w]\> dx \\
+ \int_\Pi \mu^{-1} \< W_0 (\er) u, w\> dx
+ \int_{\Pi} (\er\mu)^{-1} \dd_j \er \<\dd_j u, w\> dx .
\end{eqnarray*}
The last integral on the right-hand side of \eqref {31} is zero
since  $\left.w_\tau\right|_{\dd\Pi} = 0$.

We again apply the Lemma \ref {l22} to the first integral on the right-hand side, taking into account $ \div (\er u) = 0 $:
$$
\int_\Pi \<\rot (\er u), \rot ((\er\mu)^{-1} w)\> dx 
= \int_\Pi \<\dd_j (\er u), \dd_j ((\er\mu)^{-1} w)\> dx
+ \int_{\dd\Pi} \kappa (x) \mu^{-1} \< u, w \> dS .
$$
By the Lemma \ref {l25} with $ a = u $, $ b = \mu^{-1} w $ we obtain similarly to \eqref {**}
\begin{eqnarray*}
\int_\Pi \<\dd_j (\er u), \dd_j ((\er\mu)^{-1} w)\> dx
= \int_\Pi \<\dd_j (\er^{1/2} u), \dd_j (\er^{-1/2} \mu^{-1} w)\> dx 
- \int_\Pi (\er\mu)^{-1} \dd_j \er \< \dd_j u, w\> dx \\
- \int_\Pi \left((4\er^2\mu)^{-1} |\n\er|^2 + (2\er\mu)^{-1} \D \er\right) \<u, w\> dx
+ \int_{\dd\Pi} (2\er\mu)^{-1} \dd_\nu \er \<u, w\> dS ,
\end{eqnarray*}
and now the statement follows.
\end{proof}

\section{Proof of the Theorem \ref{t12}}
It suffices to prove the absence of eigenvalues.
Suppose that $ \left\{ u, v \right\} $ is an eigenfunction of the Maxwell operator, that is,
\begin{equation}
\label{41}
- i \mu^{-1} \rot u = \la v, \quad \div (\er u) = 0, \quad 
\left.u_\tau\right|_{\dd\Pi} = 0,
\end{equation}
\begin{equation}
\label{42}
i \er^{-1} \rot v = \la u, \quad \div (\mu v) = 0, \quad 
\left.v_\nu\right|_{\dd\Pi} = 0.
\end{equation}
Let $w \in H^1 (\Pi, \C^3)$, $\left.w_\tau\right|_{\dd\Pi} = 0$.
Then from \eqref {41}, \eqref {42} and the Lemma \ref {l24} we obtain 
\begin{equation}
\label{43} 
\int_\Pi \mu^{-1} \< \rot u, \rot w \> dx 
= i \la \int_\Pi \< v, \rot w \> dx 
= i \la \int_\Pi \< \rot v, w \> dx
= \la^2 \int _\Pi \er \< u, w \> dx .
\end{equation}
Similarly, if $f \in H^1 (\Pi, \C^3)$, $\left.f_\nu\right|_{\dd\Pi} = 0$, then
\begin{equation}
\label{44} 
\int_\Pi \er^{-1} \< \rot v, \rot f \> dx 
= - i \la \int_\Pi \< u, \rot f \> dx 
= - i \la \int_\Pi \< \rot u, f \> dx
= \la^2 \int _\Pi \mu \< v, f \> dx .
\end{equation}
In the middle equalities in \eqref{43} and \eqref{44}
we used the Lemma \ref{l24} and the conditions $\left.w_\tau\right|_{\dd\Pi} = 0$ and $\left.u_\tau\right|_{\dd\Pi} = 0$.
We add the equalities \eqref {43} and \eqref {44}, substituting the expressions from the Lemmas \ref {c32} and \ref {c33} instead of the left-hand sides:
\begin{eqnarray*}
\int_\Pi \left(\< \dd_j (\er^{1/2} u), \dd_j (\er^{-1/2} \mu^{-1} w) \>
+ \< \dd_j (\mu^{1/2} v), \dd_j (\er^{-1} \mu^{-1/2} f) \>\right.\\
\left.- \< \rot u, \er [\n (\er\mu)^{-1}, w]\> 
- \< \rot v, \mu [\n (\er\mu)^{-1}, f]\>
+ \mu^{-1} \< W(\er) u, w\> + \er^{-1} \< W(\mu) v, f\>\right) dx \\
+ \int_{\dd\Pi} \left(\left(\mu^{-1} \kappa (x) 
+ (2\er\mu)^{-1} \dd_\nu \er \right) \<u, w\> 
+ \left\<\left(\er^{-1} \kappa (x) P_{e_3^\perp}
- (2\er\mu)^{-1} \dd_\nu \mu\right) v, f \right\>\right) dS \\
= \la^2 \int_\Pi \left(\er \<u, w\> + \mu \<v, f\>\right) dx .
\end{eqnarray*}

Taking into account that $ \rot u $ and $ \rot v $ in the first integral on the left-hand side can be expressed from \eqref {41} and \eqref {42}, the last equality can be rewritten as
\begin{equation}
\label{45} 
\int_\Pi \left(\< \dd_j \Phi, \dd_j \Psi \>_{\C^6} 
+ \<V \Phi, \Psi\>_{\C^6}\right) dx 
+ \int_{\dd\Pi} \<\Sigma \Phi, \Psi\>_{\C^6} dS = 0,
\end{equation}
where
$$
\Phi = 
\left( \begin{array}{cc} \er^{1/2} u \\ \mu^{1/2} v \end{array} \right) , \quad 
\Psi = \left( \begin{array}{cc} \er^{-1/2} \mu^{-1} w \\ 
\er^{-1} \mu^{-1/2} f \end{array} \right), \qquad 
V = \left( \begin{array}{cc}
W(\er) - \er \mu \la^2 I_3 & -2 i \la F \\
2 i \la F & W(\mu) - \er \mu \la^2 I_3
\end{array} \right) ,
$$
$$
F = \left( \begin{array}{ccc}
0 & - \dd_3 (\er \mu)^{1/2} & \dd_2 (\er \mu)^{1/2} \\
\dd_3 (\er \mu)^{1/2} & 0 & - \dd_1 (\er \mu)^{1/2} \\
- \dd_2 (\er \mu)^{1/2} & \dd_1 (\er \mu)^{1/2} & 0
\end{array} \right) , 
$$
$$ 
\Sigma = \left( \begin{array}{cc}
\left(\kappa (x) + (2\er)^{-1} \dd_\nu \er\right) I_3 & 0 \\
0 & \kappa (x) P_{e_3^\perp} - (2\mu)^{-1} \dd_\nu \mu I_3
\end{array} \right) .
$$
The conditions $ \er, \mu \in \widetilde W_q^2 $ for $ q \ge 3/2 $ and the condition \eqref {01} imply $ W (\er), W (\mu) \in \widetilde { L_q} $ and $ V \in \widetilde {L_q} (\Pi) $.

It follows from $\er, \mu \in \widetilde W_{\frac{3r}{2+r}}^2$ that
$\dd_\nu \er, \dd_\nu \mu, \Sigma \in \widetilde{L_r} (\dd\Pi)$. 
The identity \eqref {45} holds for any $ \Psi \in \hat H^1 (\Pi) $, therefore, according to the condition $ A (q, r) $, we obtain $ \Phi \equiv 0 $. It means that $ u \equiv v \equiv 0 $
and the point spectrum of the Maxwell operator is empty.
The Theorem \ref {t12} is proved.


\end{document}